\newtheorem{theorem}{Theorem}
\newtheorem{proposition}[theorem]{Proposition}   
\newtheorem{lemma}[theorem]{Lemma}
\def\softO{\ensuremath{{O}{\,\tilde{ }\,}}}
\def\deg{{\rm deg}}
\def\bX{{\gamma}}
\def\bY{{\varphi}}
\def\foorp{\hfill$\square$}
\newenvironment{proof}[1][]{\noindent {\bf Proof #1:\;}}{\hfill $\Box$}
\def\QQ{{\mathbb{Q}}} 
\def\RR{{\mathbb{R}}} 
\def\CC{{\mathbb{C}}} 
\def\f{\mathbf{f}}
\def\tildef{\tilde{\mathbf{f}}}
\def\g{\mathbf{g}}
\def\h{\mathbf{h}}
\def\q{\mathbf{q}}
\def\scC{{\mathscr{C}}}
\def\scS{{\mathscr{S}}}
\def\sfP{{\mathsf{P}}}
\def\sfQ{{\mathsf{Q}}}
\def\sing{{\rm sing}\,}
\def\crit{{\rm crit}\,}
\def\reg{{\rm reg}\,}
\def\Id{{\rm I}}
\def\minors{{\rm minors}\,} 
\newcommand{\zeroset}[1]{{\mathcal{Z}(#1)}}
\newcommand{\ideal}[1]{{I(#1)}}
\def\cc{{{C}}}
\def\H{{H}}
\def\lagrange{{\bell}}
\def\incidence{\mathcal{J}} 
\def\incidencereg{\mathcal{K}} 
\def\setC{{\mathcal{C}}}
\def\setV{{\incidence}}
\def\setZ{{\mathcal{Z}}}
\def\setH{{\mathcal{H}}}
\def\L{\mathcal{L}}
\def\zarH{{\mathscr{H}}}
\def\zarfiber{{\mathscr{A}}}
\def\zarU{{\mathscr{U}}}
\def\zarV{{\mathscr{V}}}
\def\zarW{{\mathscr{W}}}
\def\zarM{{\mathscr{M}}}
\def\sfG{{\mathsf{G}}} 
\def\sfH{{\mathsf{A}}} 
\def\M{{M}} 
\def\jac{{D\,}} 
\def\rank{{\rm rank}}
\def\deg{{\rm deg}}
\def\GL{{\mathrm{GL}}} 
\def\X{{x}} 
\def\Y{{y}} 
\def\Z{{z}} 
\def\T{{t}}
\def\frakh{\mathfrak{h}}
\def\matY{{Y}}
\def\x{{x}} 
\def\bfx{\mathbf{x}} 
\def\vecx{{\bf x}} 
\def\y{{y}} 
\def\vecy{{\bf y}}   
\def\z{{z}} 
\def\vecz{{\bf z}} 
\def\u{\mathbf{u}}  
\def\e{\mathbf{e}}  
\def\v{\mathbf{v}}  
\def\w{\mathbf{w}}  
\def\fiber{{\alpha}}
\def\bell{{\ell}}
\title{{Real root finding for rank defects in linear Hankel matrices}}
\author{Didier Henrion$^{1,2,3}$ \and Simone Naldi$^{1,2}$ \and Mohab Safey El Din$^{4,5,6,7}$}
\date{\today}
\begin{document}

\maketitle

\thispagestyle{empty}

\begin{abstract}
 Let $H_0, \ldots, H_n$ be $m \times m$ 
matrices with entries in $\QQ$ and Hankel structure, i.e. constant skew diagonals.
We consider the linear Hankel matrix $H(\vecx)=H_0+\X_1H_1+\cdots+\X_nH_n$ and the problem of computing
      sample  points in  each connected component of the real algebraic set defined
 by the rank constraint ${\sf rank}(H(\vecx))\leq r$, 
 for a given integer $r \leq m-1$.
 Computing sample points in real algebraic sets defined by rank
 defects in linear matrices is a general problem that finds
      applications in many areas such as control theory, computational
 geometry, optimization, etc. Moreover, Hankel matrices appear in
 many  areas of engineering sciences. Also, since Hankel matrices are
 symmetric, any algorithmic development for this problem can be
 seen as a first step towards a dedicated exact algorithm for solving
 semi-definite programming problems, i.e. linear matrix
      inequalities. Under some genericity assumptions on the input (such as smoothness
 of an incidence variety), we design a probabilistic algorithm for
 tackling this problem. It is an adaptation of the so-called critical
 point method that takes advantage of the special structure of the
 problem.
 Its complexity reflects this: it is essentially quadratic
 in specific degree bounds on 
an incidence variety. We report on
 practical experiments and analyze how the algorithm takes advantage
 of this special structure. A first implementation outperforms
 existing implementations for computing sample points in general real
 algebraic sets: it tackles examples that are out of reach of the
 state-of-the-art.
\end{abstract}

\newpage 

\section{Introduction}

\paragraph*{Problem  statement and motivation}
Let $\QQ,\RR,\CC$ be respectively the
fields of rational, real and complex numbers, and let $m,n$ 
be positive integers. Given $m \times m$ 
      matrices $\H_0, \H_1, \ldots, \H_n$ with entries in $\QQ$ and Hankel
      structure, i.e. constant skew diagonals,
we  consider the {\it linear Hankel matrix} $\H(\vecx) = \H_0+\X_1\H_1+\ldots+\X_n\H_n$,
denoted $\H$ for short, and the algebraic set
\[
\setH_r =
      \{\vecx \in \CC^n : \rank \, \H(\vecx) \leq r\}.
\]
The goal of this paper is to provide an efficient algorithm for
computing at least one sample point per connected component of the
real algebraic set $\setH_r \cap \RR^n$.

Such an algorithm can be used to solve the matrix rank minimization
problem for $\H$. Matrix rank minimization mostly consists of
minimizing the rank of a given matrix whose entries are subject to
constraints defining a convex set. These problems arise in many
engineering or statistical modeling applications and have recently
received a lot of attention. Considering Hankel structures is
relevant since it arises in many applications (e.g. for model
reduction in linear dynamical systems described by Markov parameters, see
\cite[Section 1.3]{markovsky12}).


Moreover, an algorithm for computing sample points in each connected
component of $\setH_r\cap\RR^n$ can also be used to decide the
emptiness of the feasibility set $S=\{\vecx \in \RR^n : \H(\vecx)\succeq 0\}$.
Indeed, considering the minimum rank $r$ attained in the boundary of
$S$, it is easy to prove that one of the connected components of
$\setH_{r} \cap \RR^n$ is actually contained in $S$. Note also that
such feasibility sets, also called Hankel spectrahedra, have recently
attracted some attention (see e.g. \cite{BS14}).

The  intrinsic algebraic nature of our problem makes relevant the
design of
      exact algorithms to achieve reliability.
On the one hand, we aim at
exploiting   algorithmically the special Hankel structure to 
gain  efficiency. 
On the other hand, the design of a special algorithm
for the  case of linear Hankel matrices can bring the foundations of a
general    approach to e.g. the symmmetric case which is important for
semi-definite      programming, i.e. solving linear matrix inequalities.

\paragraph*{Related works and state-of-the-art} Our problem consists
of computing sample points in real algebraic sets. The first algorithm
for this problem is due to Tarski but its complexity was not
elementary recursive \cite{Tarski}. Next, Collins designed the
Cylindrical Algebraic Decomposition algorithm \cite{c-qe-1975}. Its
complexity is doubly exponential in the number of variables which is
far from being optimal since the number of connected components of a
real algebraic set defined by $n$-variate polynomial equations of
degree $\leq d$ is upper bounded by $O(d)^n$. Next, Grigoriev and
Vorobjov \cite{GV88} introduced the first algorithm based on critical
point computations computing sample point in real algebraic sets
within $d^{O(n)}$ arithmetic operations. This work has next been
improved and generalized (see \cite{BaPoRo06} and references therein)
from the complexity viewpoint. We may apply these algorithms to our
problem by computing all $(r+1)$-minors of 
the Hankel matrix and compute sample points in the real algebraic set
defined by the vanishing of these minors.  This is done in time
$(\binom{m}{r+1}\binom{n+r}{r})^{O(1)}+r^{O(n)}$ however since the
constant in the exponent is rather high, these algorithms did not lead
to efficient implementations in practice.  Hence, another series of
works, still using the critical point method but aiming at designing
algorithms that combine asymptotically optimal complexity and
practical efficiency has been developed (see e.g. \cite{BGHSS, SaSc03,
  GS14} and references therein).

Under regularity assumptions, these yield probabilistic algorithms
running in time which is essentially $O(d^{3n})$ in the smooth case
and $O(d^{4n})$ in the singular case (see \cite{S05}). Practically,
these algorithms are implemented in the library {\sc RAGlib} which
uses Gr\"obner bases computations (see \cite{faugere2012critical,
  Sp14} about the complexity of computing critical points with
Gr\"obner bases).

Observe that determinantal varieties such as $\setH_r$ are generically
singular (see \cite{bruns1988determinantal}). Also the
aforementioned algorithms do not exploit the structure of the
problem. In \cite{HNS2014}, we introduced an algorithm for computing
real points at which a 
{\em generic} linear square matrix of size $m$ has rank $\leq m-1$, by
exploiting the structure of the problem. However, because of the
requested genericity of the input linear matrix, we cannot use it for
linear Hankel matrices. Also, it does not allow to get sample points
for a given, smaller rank deficiency.

\paragraph*{Methodology      and main results} Our main result is an
algorithm      that computes sample points in each connected component of
$\setH_r      \cap \RR^n$ under some genericity assumptions on the entries
of the      linear Hankel matrix $\H$ (these genericity assumptions are
made      explicit below). Our algorithm exploits the Hankel structure of
the      problem. Essentially, its complexity is quadratic in a multilinear
B\'ezout bound on the number of complex solutions. Moreover, we find
that, heuristically, this bound is less than
${{m}\choose{r+1}}{{n+r}\choose{r}}{{n+m}\choose{r}}$.
Hence, for subfamilies of the real root finding problem on linear
Hankel matrices where the maximum rank allowed $r$ is fixed, the complexity
is essentially in $(nm)^{O(r)}$.

The very basic idea is to study the algebraic set $\setH_r\subset
\CC^n$ as the Zariski closure of the projection of an incidence
variety, lying in $\CC^{n+r+1}$. This variety encodes the fact that
the kernel of $\H$ has dimension $\geq m-r$.  This lifted variety
turns out to be generically smooth and equidimensional and defined by
quadratic polynomials with multilinear structure.  When these
regularity properties are satisfied, we prove that computing one point
per connected component of the incidence variety is sufficient to
solve the same problem for the variety $\setH_r \cap \RR^n$. We also
prove that these properties are generically satisfied. We remark that
this method is similar to the one used in \cite{HNS2014}, but in this
case it takes strong advantage of the Hankel structure of the linear
matrix, as detailed in Section \ref{sec:prelim}. This also reflects on
the complexity of the algorithm and on practical performances.

Let $\cc$ be a connected component of $\setH_r\cap\RR^n$, and and
$\Pi_1,\pi_1$ be the canonical projections $\Pi_1: (\x_1, \ldots,
\x_n, \y_1, \ldots, \y_{r+1})\to \x_1$ and $\pi_1: (\x_1, \ldots,
\x_n)\to \x_1$. We prove that in generic coordinates, either {\em (i)}
$\pi_1(C) = \RR$ or {\em (ii)} there exists a critical point of the
restriction of $\Pi_1$ to the considered incidence variety. Hence,
after a generic linear change of variables, the algorithm consists of
two main steps: {\em (i)} compute the critical points of the
restriction of $\Pi_1$ to the incidence variety and {\em (ii)}
instantiating the first variable $\X_1$ to a generic value and perform
a recursive call following a geometric pattern introduced in
\cite{SaSc03}.

This latter step ({\em i}) is actually performed by building the
Lagrange system associated to the optimization problem whose solutions
are the critical points of the restriction of $\pi_1$ to the incidence
variety. Hence, we use the algorithm in \cite{jeronimo2009deformation}
to solve it.  One also observes heuristically that these Lagrange
systems are typically zero-dimensional.

However, we were not able to prove this finiteness property,
but we
      prove that it holds when we restrict the optimization step
to the set
      of points $\vecx \in \setH_r$ such that $\rank \, \H(\vecx)
= p$, for
      any $0 \leq p \leq r$. However, this is sufficient 
to
      conclude that there are finitely many critical points of the
restriction
      of $\pi_1$ to $\setH_r \cap \RR^n$, and that the
algorithm
      returns the output correctly.

When the
      Lagrange system has dimension $0$, the complexity of
solving
      its equations is essentially
quadratic
      in the number of its complex solutions.
As previously announced, by the
      structure of these systems one can deduce multilinear
B\'ezout bounds on
      the number of solutions that are polynomial in $nm$ when
$r$ is fixed, and polynomial in $n$ when $m$ is fixed. 
This complexity result outperforms
the
      state-of-the-art algorithms.
We finally
      remark that the complexity gain is reflected also
in the
      first implementation of the algorithm, which allows
to solve instances of our problem that are out of reach of the
general algorithms implemented in {\sc RAGlib}.

\paragraph*{Structure
      of the paper}
The paper
      is structured as follows. Section \ref{sec:prelim} contains
preliminaries
      about Hankel matrices and the basic notation of the
paper; we
      also prove that our regularity assumptions are generic. In
Section
      \ref{sec:algo} we describe the algorithm and prove its
correctness.
      This is done by using preliminary results proved in
Sections
      \ref{sec:dimension} and \ref{sec:closedness}. Section
      \ref{ssec:algo:complexity} contains the complexity analysis
and bounds
      for the number of complex solutions of the output of the
algorithm.
Finally, Section \ref{sec:exper} presents the results of our
experiments on generic linear Hankel matrices, and comparisons
with the state-of-the-art algorithms for the real root finding
problem.

\section{Notation and preliminaries} \label{sec:prelim}

\paragraph*{Basic notations} \label{ssec:prelim:basic}

We denote by $\GL(n, \QQ)$ (resp. $\GL(n, \CC)$)
the set of $n \times n$ non-singular matrices with rational (resp.
complex) entries. For a matrix $M \in \CC^{m \times m}$ and an integer
$p \leq m$, one denotes with $\minors(p,M)$ the list of determinants
of $p \times p$ sub-matrices of $M$. We denote by $M'$ the
transpose matrix of $M$.

Let $\QQ[\vecx]$ be the ring of polynomials on $n$ variables $\vecx = (\X_1, \ldots, \X_n)$
and let $\f = (f_1, \ldots, f_p) \in \QQ[\vecx]^p$ be a polynomial system.
The common zero locus of the entries of $\f$ is denoted by
$\zeroset{\f} \subset \CC^n$, and its dimension with $\dim \, \zeroset{\f}$. The ideal generated by $\f$ is denoted by
$\left\langle \f \right\rangle$, while if $\mathcal{V} \subset \CC^n$ is any set, the
ideal of polynomials vanishing on $\mathcal{V}$ is denoted by $\ideal{\mathcal{V}}$, while the
set of regular (resp. singular) points of $\mathcal{V}$ is denoted by $\reg \, \mathcal{V}$
(resp. $\sing \, \mathcal{V}$). If $\f = (f_1, \ldots, f_p) \subset \QQ[\vecx]$, we denote
by $\jac \f = \left( \partial f_i / \partial \X_j \right)$ the Jacobian matrix
of $\f$. We denote by $\reg(\f) \subset \zeroset{\f}$ the subset where
$\jac \f$ has maximal rank.

A set $\mathcal{E} \subset \CC^n$ is locally closed if $\mathcal{E} = \setZ
\cap \mathscr{O}$ where $\setZ$ is a Zariski closed set and $\mathscr{O}$ is a
Zariski open set. 

Let $\mathcal{V} = \zeroset{\f} \subset \CC^n$ be a smooth equidimensional algebraic set, of dimension $d$,
and let $\g \colon \CC^n \to \CC^p$ be an algebraic map. The set of critical points of
the restriction of $\g$ to $\mathcal{V}$ is the solution set of $\f$ and of the $(n-d+p)-$minors
of the matrix $\jac (\f, \g)$, and it is denoted by $\crit(\g, \mathcal{V})$. Finally, if $\mathcal{E}
\subset \mathcal{V}$ is a locally closed subset of $\mathcal{V}$, we denote by $\crit(\g, \mathcal{E})
= \mathcal{E} \cap \crit(\g, \mathcal{V})$.

Finally, for $M \in \GL(n, \CC)$ and $f \in \QQ[\vecx]$, we denote by $f^M(\vecx) = f(M\,\vecx)$,
and if $\f=(f_1, \ldots, f_p) \subset \QQ[\vecx]$ and $\mathcal{V} = \zeroset{\f}$, by
$\mathcal{V}^M = \zeroset{\f^M}$ where $\f^M = (f^M_1, \ldots, f^M_p)$.

\paragraph*{Hankel structure} \label{ssec:prelim:hanktoep}
Let $\{h_1, \ldots, h_{2m-1}\} \subset \QQ$. The matrix $\H = (h_{i+j-1})_{1 \leq i,j \leq m}
\in \QQ^{m \times m}$ is called a Hankel matrix, 
and we use the notation $\H = {\sf Hankel}(h_1, \ldots, h_{2m-1})$.
The structure of a Hankel matrix induces structure on its kernel. By
\cite[Theorem 5.1]{heinig1984algebraic}, one has that if $\H$ is a Hankel
matrix of rank at most $r$, then there exists a non-zero vector $\vecy = (\y_1,
\ldots, \y_{r+1}) \in \QQ^{r+1}$ such that the columns of the $m \times
(m-r)$ matrix
\[
\matY(\vecy)=
\begin{bmatrix}
\vecy      & 0      & \ldots & 0      \\
0      & \vecy      & \ddots & \vdots \\
\vdots & \ddots & \ddots & 0      \\
0      & \ldots & 0      & \vecy      \\
\end{bmatrix}
\]
generate a $(m-r)-$dimensional subspace of the kernel of $\H$. 
We observe that $\H \, \matY(\vecy)$ is also a Hankel matrix.

The product $\H \, \matY(\vecy)$ can be re-written as a matrix-vector product
$\tilde{\H} \, y$, with $\tilde{\H}$ a given rectangular Hankel matrix.
Indeed, let $\H={\sf Hankel}(h_1, \ldots, h_{2m-1})$. Then, as previously
observed, $\H \, \matY(\vecy)$ is a rectangular Hankel matrix, of size
$m \times (m-r)$, whose entries coincide with the entries of
\[
\tilde{\H} \, \vecy =
\begin{bmatrix}
h_1      & \ldots & h_{r+1} \\
\vdots   &        & \vdots \\
h_{2m-r-1} & \ldots & h_{2m-1}
\end{bmatrix}
\begin{bmatrix}
\y_1    \\
\vdots \\
\y_{r+1}
\end{bmatrix}.
\]

Let $H(\vecx)$ be a linear Hankel matrix.
From \cite[Corollary 2.2]{conca1998straightening} we deduce that, for $p \leq r$, then
the ideals $\left\langle \minors(p+1,\H(\vecx)) \right\rangle$ and
$\left\langle\minors(p+1,\tilde{\H}(\vecx))\right\rangle$ coincides. One deduces that
$\vecx = (\X_1, \ldots, \X_n) \in \CC^n$ satisfies $\rank \, \H(\vecx) = p$
if and only if it satisfies $\rank \, \tilde{\H}(\vecx) = p$.

\paragraph*{Basic sets} \label{ssec:prelim:polsys}
We first recall that the linear matrix $\H(\vecx) = \H_0 + \X_1\H_1 + \ldots + \X_n\H_n$,
where each $H_i$ is a Hankel matrix, is also a Hankel matrix. It is identified
by the $(2m-1)(n+1)$ entries of the matrices $\H_i$. Hence we often consider $\H$ as an element of
$\CC^{(2m-1)(n+1)}$. For $M \in \GL(n, \QQ)$, we denote by $\H^M(\vecx)$ the linear
matrix $\H(M \vecx)$.

We define in the following the main algebraic sets appearing during
the execution of our algorithm, given $\H \in \CC^{(2m-1)(n+1)}$,
$0 \leq p \leq r$, $M \in \GL(n, \CC)$ and $\u = (u_1, \ldots, u_{p+1}) \in \QQ^{p+1}$.


{\it Incidence varieties.} We consider the polynomial system
\[
\begin{array}{lccl}
  \f(\H^M,\u,p): &  \CC^{n} \times \CC^{p+1} & \longrightarrow & \CC^{2m-p-1} \times \CC \\
  &  (\vecx, \vecy) & \longmapsto & 
  \left ((\tilde{H}(M \, \vecx)\, \vecy)', \u'\vecy-1 \right )
\end{array}
\]
where $\tilde{H}$ has been defined in the previous section.
We denote by $\setV(\H^M, \u, p) = \zeroset{\f_p(\H^M,\u)} \subset \CC^{n+p+1}$ and simply
$\setV=\setV(\H^M,\u,p)$ and $\f=\f(\H^M, \u, p)$ when $p, \H, M$ and $\u$ are clear.
We also denote by $\incidencereg(\H^M, \u, p) = \incidence(\H^M,\u, p) \cap
\{(\vecx, \vecy)\in \CC^{n+p+1} : \rank \,\H(\vecx)=p\}.$

{\it Fibers.} Let $\fiber \in \QQ$. We denote by
$\f_{\fiber}(\H^M, \u, p)$ (or simply $\f_{\fiber}$) the polynomial system obtained
by adding $\X_1-\fiber$ to $\f(\H^M, \u, p)$. The resulting algebraic set
$\zeroset{\f_{\fiber}}$, denoted by $\setV_{\fiber}$, equals $\setV \cap \zeroset{\X_1-\fiber}$.

{\it Lagrange systems.}
Let $\v \in \QQ^{2m-p}$. Let $\jac_1\f$ denote the matrix of size $c \times (n+p)$
obtained by removing the first column of $\jac \f$ (the derivative
w.r.t. $\X_1$), and define $\bell=\bell(\H^M, \u, \v, p)$ as the map
\[
\begin{array}{lrcl}
\bell : & \CC^{n+2m+1} & \to     & \CC^{n+2m+1} \\
                        & (\vecx,\vecy,\vecz)    & \mapsto & (\tilde{\H}(M \, \vecx) \, \vecy, \u'\vecy-1, \vecz'\jac_1\f, \v'\vecz-1)
\end{array}
\]
where $\vecz=(\z_1, \ldots, \z_{2m-p})$ stand for Lagrange multipliers. We
finally define $\setZ(\H^M, \u, \v, p) = \zeroset{\bell(\H^M, \u, \v, p)} \subset
\CC^{n+2m+1}.$

\paragraph*{Regularity property $\sfG$} \label{ssec:prelim:regul}

We say that a polynomial system $\f \in \QQ[x]^c$ satisfies Property $\sfG$ if
the Jacobian matrix $\jac \, \f$ has maximal rank at any point of $\zeroset{\f}$.
We remark that this implies that:
\begin{enumerate}
\item the ideal $\ideal{\f}$ is radical;
\item the set $\zeroset{\f}$ is either empty or smooth and equidimensional of co-dimension $c$.
\end{enumerate}

We say that $\lagrange(\H^M, \u, \v, p)$ satisfies $\sfG$ over
$\incidencereg(H^M, \u, p)$ if the following holds: for $(\vecx,\vecy,\vecz) \in
\setZ(\H^M, \u, \v, p)$ such that $(\vecx,\vecy) \in \incidencereg(H^M, \u, p)$,
the matrix $\jac(\lagrange(\H^M, \u, \v, p))$ has maximal rank at $(\vecx,\vecy,\vecz)$.

Let $\u  \in \QQ^{p+1}$. We say that $\H \in \CC^{(2m-1)(n+1)}$ satisfies Property
$\sfG$ if $\f(\H, \u, p)$ satisfies Property $\sfG$ for all $0 \leq p \leq r$.

The first result essentially shows that $\sfG$ holds
for $\f(\H^M, \u, p)$ (resp. $\f_\fiber(\H^M, \u, p)$) when the
input parameter $\H$ (resp. $\fiber$) is generic enough.


\begin{proposition} \label{prop:regularity}
Let $M \in \GL(n,\CC)$.
\begin{itemize}
\item[(a)] There exists a non-empty Zariski-open set $\zarH \subset
  \CC^{(2m-1)(n+1)}$ such that, if $\H \in \zarH \cap
  \QQ^{(2m-1)(n+1)}$, for all $0 \leq p \leq r$ and $\u \in \QQ^{p+1}-\{\mathbf{0}\}$,
  $\f(\H^M, \u, p)$ satisfies Property
  $\sfG$;
\item[(b)] for $\H \in \zarH$, and $0 \leq p \leq r$, if $\setV(\H^M, \u, p) \neq \emptyset$
  then $\dim \, \setH_p \leq n-2m+2p+1$; 
\item[(c)] For $0 \leq p \leq r$ and $\u \in \QQ^{p+1}$, if $\f(\H^M, \u, p)$ satisfies
  $\sfG$, there exists a non-empty Zariski open set $\zarfiber \subset \CC$ such
  that, if $\fiber \in \zarfiber$, the polynomial system $\f_{\fiber}$
  satisfies $\sfG$;
\end{itemize}
\end{proposition}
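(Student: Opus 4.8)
The plan is to prove (a) by reducing Property~$\sfG$ for $\f(\H^M,\u,p)$ — for an arbitrary $\u\neq\mathbf{0}$ — to the single statement that the projective incidence variety $W=\{(\vecx,[\vecy])\in\CC^n\times\mathbb{P}^p:\tilde H(M\vecx)\,\vecy=0\}$ is smooth and equidimensional of dimension $\delta:=n-2m+2p+1$. The key observation I would exploit is that at a point $(\vecx_0,\vecy_0)\in\zeroset{\f(\H^M,\u,p)}$ the Jacobian of $\f(\H^M,\u,p)$ is the block matrix $\left[\begin{smallmatrix}A&\tilde H(M\vecx_0)\\ 0&\u'\end{smallmatrix}\right]$, where $A$ gathers the $\X$-derivatives of $\tilde H(M\vecx)\vecy_0$; a left-kernel vector $(\lambda,\mu)$ then satisfies $\lambda'\tilde H(M\vecx_0)+\mu\u'=0$, and right-multiplying by $\vecy_0$ together with $\tilde H(M\vecx_0)\vecy_0=0$ and $\u'\vecy_0=1$ forces $\mu=0$. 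The residual conditions $\lambda'A=0$ and $\lambda'\tilde H(M\vecx_0)=0$ no longer mention $\u$, and working in a chart $\{\y_i\neq0\}$ at $[\vecy_0]$ — where $\lambda'\tilde H(M\vecx_0)\vecy_0=0$ recovers the one column of $\lambda'\tilde H(M\vecx_0)$ that the chart omits — they say precisely that $(\vecx_0,[\vecy_0])$ is a smooth point of $W$ of dimension $\delta$. Hence, once $W$ is globally smooth of dimension $\delta$, $\f(\H^M,\u,p)$ satisfies $\sfG$ for every $\u\neq\mathbf{0}$ simultaneously.

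It then remains to produce a nonempty Zariski-open set of $\H$ over which $W$ has this property. I would cover $W$ by its $p+1$ affine charts $\{\y_i\neq0\}$, each isomorphic to $\setV(\H^M,e_i,p)=\zeroset{\f(\H^M,e_i,p)}$ ($e_i$ the $i$-th canonical vector, which lies in $\QQ^{p+1}\setminus\{\mathbf{0}\}$), and treat the $(2m-1)(n+1)$ skew-diagonal entries of $\H=(\H_0,\dots,\H_n)$ as parameters. The core computation is that the global map $(\H,\vecx,\vecy)\mapsto\f(\H^M,e_i,p)$ has surjective differential at each of its zeros: there $\y_i=1$, so $\vecy\neq0$, and differentiating $\tilde H(M\vecx)\vecy$ with respect to the $2m-1$ skew-diagonals of $\H_0$ yields the ``striped'' matrix $S$ whose $j$-th row carries $(\y_1,\dots,\y_{p+1})$ in columns $j,\dots,j+p$; viewing $S$ as multiplication by the nonzero polynomial $\y_1+\y_2 t+\dots+\y_{p+1}t^{p}$ over the domain $\CC[t]$ shows it has full row rank $2m-p-1$, while the normalization row $(0,e_i')$ is independent of it (it vanishes on the $\H_0$-columns and $e_i\neq\mathbf{0}$). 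Since $0$ is then a regular value of this global map, its zero set is smooth, and generic smoothness (the algebraic Sard theorem in characteristic zero) of the projection of this zero set to the parameter space produces a nonempty Zariski-open $\zarH_p^{(i)}\subseteq\CC^{(2m-1)(n+1)}$, defined over $\QQ$, over which $\setV(\H^M,e_i,p)$ is empty or smooth equidimensional of codimension $2m-p$. I would then take $\zarH=\bigcap_{p=0}^{r}\bigcap_{i=1}^{p+1}\zarH_p^{(i)}$: for $\H\in\zarH$ every chart of $W$ is smooth of dimension $\delta$, hence so is $W$, and (a) follows.

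For (b), the projection $(\vecx,[\vecy])\mapsto\vecx$ from $W$ is proper because $\mathbb{P}^p$ is complete, so its image is Zariski-closed; by \cite{conca1998straightening} it equals $\zeroset{\minors(p+1,\tilde H(M\vecx))}=\setH_p^M$, which has the same dimension as $\setH_p$. Since $\dim W\le\delta$ and $\setV(\H^M,\u,p)\neq\emptyset$ forces $W\neq\emptyset$, this gives $\dim\setH_p\le n-2m+2p+1$. For (c), if $\f(\H^M,\u,p)$ satisfies $\sfG$ then $\setV=\setV(\H^M,\u,p)$ is empty (the claim being vacuous) or smooth of dimension $\delta$; I would apply generic smoothness in characteristic zero to the restriction to $\setV$ of the first-coordinate function $\X_1$, obtaining a nonempty Zariski-open $\zarfiber\subseteq\CC$ over which this morphism is submersive. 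Submersiveness at a point of $\setV_\fiber$ is exactly the statement that the row $e_1'$ — the gradient of $\X_1-\fiber$ — is linearly independent of the rows of $\jac\f$ there, so $\jac\f_\fiber$ has maximal rank $2m-p+1$ at every point of $\zeroset{\f_\fiber}$, i.e.\ $\f_\fiber$ satisfies $\sfG$.

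The main obstacle I expect is twofold and concentrated in (a). First, the reduction itself: seeing that the multiplier $\mu$ attached to the normalization $\u'\vecy-1$ always vanishes, so that $\sfG$ is, up to the chart computation, equivalent to smoothness of the single $\u$-free variety $W$ — this is what makes the uniformity over all $\u$ essentially free rather than requiring a delicate parameter count. Second, the regular-value computation, which is the only place where the Hankel structure is genuinely used, through the identification of the relevant Jacobian block with a polynomial-multiplication (Toeplitz-type) matrix whose full rank is equivalent to $\vecy\neq0$. Everything else — the chart cover of $W$, the proper projection in (b), and the Sard argument in (c) — should be routine.
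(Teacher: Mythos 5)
Your proposal is correct, and its computational core coincides with the paper's: the decisive step in both is that the derivative of $\tilde{\H}(M\vecx)\vecy$ with respect to the $2m-1$ entries of $\H_0$ is the striped matrix of shifted copies of $\vecy$, which has full row rank $2m-p-1$ as soon as $\vecy\neq\mathbf{0}$, after which Thom-type transversality (your ``algebraic Sard over the parameter space'') gives the open set $\zarH$, and part (c) is in both cases Sard's lemma for $\Pi_1$ restricted to $\setV(\H,\u,p)$ plus the observation that submersiveness of $\X_1$ is exactly maximal rank of the augmented Jacobian. Where you genuinely diverge is in the packaging of (a) and (b). For (a), the paper fixes $\u$, keeps the row $\u'$ inside the transversality computation, and concludes directly; you instead projectivize the $\vecy$-variables, show via the left-kernel computation (the multiplier $\mu$ of $\u'\vecy-1$ must vanish) that Property $\sfG$ for any $\u\neq\mathbf{0}$ reduces to the full-rank condition on the $\u$-free chart Jacobians of $W$, and then apply transversality only to the $p+1$ charts $\u=e_i$. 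This buys you the ``for all $\u$'' quantifier with a single open set $\zarH$ in a transparent way, a point the paper's proof leaves implicit since its $\zarH_p$ is constructed after fixing $\u$; the small caution is to phrase the transfer in terms of full rank of the chart Jacobians (which is what your regular-value argument actually delivers) rather than smoothness of the set $W$ per se, since the latter is a weaker statement. For (b), the paper introduces the auxiliary map $\tilde q$ over $\{\vecy\neq\mathbf{0}\}$, gets a smooth set of dimension $n-2m+2p+2$, and invokes the theorem on the dimension of fibers; you use instead that the image of the projective variety $W$ under $(\vecx,[\vecy])\mapsto\vecx$ is $\setH_p^M$ (via the Conca ideal identity, as in the paper) and that dimension cannot increase under a morphism — properness is not even needed for the inequality — which is slightly more economical. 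Only minor omissions remain (the empty-zero-set case in the transversality step, and the empty-fiber case in (c)), both handled trivially as in the paper.
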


\proof
Without loss of generality, we can assume that $M = \Id_n$. We let $0
\leq p \leq r$, $\u \in \QQ^{p+1}-\{\mathbf{0}\}$ and recall that we identify the
space of linear Hankel matrices with $\CC^{(2m-1)(n+1)}$. This space is
endowed by the variables $\frakh_{k,\ell}$ with $1\leq k \leq 2m-1$
and $0\leq \ell \leq n$; the generic linear Hankel matrix is then
given by
$\mathfrak{H}=\mathfrak{H}_0+\X_1\mathfrak{H}_1+\cdots+\X_n\mathfrak{H}_n$
with $\mathfrak{H}_i={\sf Hankel}(\frakh_{1,i}, \ldots, \frakh_{2m-1, i})$.

We consider the map
\[
  \begin{array}{lccc}
  q : &  \CC^{n+(p+1)+(2m-1)(n+1)} & \longrightarrow & \CC^{2m-p} \\
      &  (\vecx, \vecy, \H) & \longmapsto& \f(\H, \u, p)
  \end{array}
\]
and, for a given $\H \in \CC^{(2m-1)(n+1)}$, its section-map $q_\H
\colon \CC^{n+(p+1)} \to \CC^{2m-p}$ sending $(\vecx,\vecy)$ to
$q(\vecx,\vecy,\H)$. We also consider the map $\tilde{q}$ which associates
to $(\vecx, \vecy, \H)$ the entries of $\tilde{H}\vecy$ and its section map
$\tilde{q}_H$; we will consider these latter maps over the open set
$O=\{(\vecx, \vecy)\in \CC^{n+p+1}\mid \vecy\neq \mathbf{0}\}$. We prove below
that $\mathbf{0}$ is a regular value for both $q_H$ and $\tilde{q}_H$.

Suppose first that $q^{-1}(\mathbf{0}) = \emptyset$
(resp. $\tilde{q}^{-1}(\mathbf{0})$). We deduce that for all $\H \in
\CC^{(2m-1)(n+1)}$, $q_H^{-1}(\mathbf{0}) = \emptyset$ (resp
$\tilde{q}_H^{-1}(\mathbf{0}) = \emptyset$) and $\mathbf{0}$ is a
regular value for both maps $q_H$ and $\tilde{q}_H$. Note also that
taking $\zarH = \CC^{(2m-1)(n+1)}$, we deduce that $\f(\H, \u, p)$
satisfies $\sfG$.

Now, suppose that $q^{-1}(\mathbf{0})$ is not empty and let
$(\vecx,\vecy,\H) \in q^{-1}(0)$.  Consider the Jacobian matrix $\jac q$ of
the map $q$ with respect to the variables $\vecx,\vecy$ and the entries of $\H$,
evaluated at $(\vecx,\vecy,\H)$. We consider the submatrix of $\jac q$
by selecting the column corresponding to:
\begin{itemize}
\item the partial derivatives with respect to $\frakh_{1, 0}, \ldots,
  \frakh_{2m-1, 0}$;
\item the partial derivatives with respect to $\Y_1, \ldots,
  \Y_{p+1}$.
\end{itemize}
We obtain a $(2m-p) \times (2m+p)$ submatrix of $\jac q$; we prove below
that it has full rank $2m-p$.

Indeed, remark that the $2m-p-1$ first lines correspond to the entries
of $\tilde{\H}\vecy$ and last line corresponds to the derivatives of
$\u'\vecy-1$. Hence, the structure of this submatrix is as below
\[
\begin{bmatrix}
  \y_1 &  \ldots & \y_{p+1} & 0        & \ldots  &  0 &0& \cdots & 0 \\
  0   & \y_1 & \ldots & \y_{p+1} & \ldots  &  0  & & \\
  \vdots & & \ddots & & \ddots & & \vdots & & \vdots\\
  \vdots & &        &\y_1 & \ldots & \y_{p+1}& 0 &   & 0\\
  0      & & \cdots & & \cdots &  0 & u_1 & \cdots & u_{p+1}\\
\end{bmatrix}
\]
Since this matrix is evaluated at the solution set of $\u'\vecy-1=0$,
we deduce straightforwardly that one entry of $\u$ and one entry of $\vecy$
are non-zero and that the above matrix is full rank and that
$\mathbf{0}$ is a regular value of the map $q$.

We can do the same for $\jac \tilde{q}$ except the fact that we do not
consider the partial derivatives with respect to $\Y_1, \ldots,
\Y_{p+1}$. The $(2m-p-1) \times (2m-1)$ submatrix we obtain corresponds
to the upper left block containing the entries of $\vecy$. Since
$\tilde{q}$ is defined over the open set $O$ in which $\vecy\neq
\mathbf{0}$, we also deduce that this submatrix has full rank
$2m-p-1$.

By Thom's Weak Transversality Theorem one deduces that there exists a
non-empty Zariski open set $\zarH_p \subset \CC^{(2m-1)(n+1)}$ such
that if $\H \in \zarH_p$, then $\mathbf{0}$ is a regular value of
$q_\H$ (resp. $\tilde{q}_\H$). We deduce that for $\H \in \zarH_p$, the
polynomial system $\f(\H, \u, p)$ satisfies $\sfG$ and using the
Jacobian criterion \cite[Theorem 16.19]{Eisenbud95},
$\incidence(\H, \u, p)$ is either empty or smooth equidimensional of
dimension $n-2m+2p+1$. This proves assertion (a), with $\zarH =
\bigcap_{0 \leq p \leq r}\zarH_p$.

Similarly, we deduce that $\tilde{q}_\H^{-1}(\mathbf{0})$ is either
empty or smooth and equidimensional of dimension $n-2m+2p+2$.  Let
$\Pi_\vecx$ be the canonical projection $(\vecx, \vecy)\to \vecx$;
note that for
any $\vecx \in \setH_r$, the dimension of $\Pi_\vecx^{-1}(x)\cap
\tilde{q}_H^{-1}(\mathbf{0})$ is $\geq 1$ (by homogeneity
of the $\vecy$-variables). By the Theorem on the Dimension of Fibers
\cite[Sect.6.3,Theorem 7]{Shafarevich77}, we deduce that
$n-2m+2p+2-\dim(\setH_p) \geq 1$. We deduce that for
$\H \in \zarH$, $\dim(\setH_p) \leq n-2m+2p+1$ which proves assertion
(b).

It remains to prove assertion (c). We assume that 
$\f(\H, \u, p)$ satisfies
$\sfG$. Consider the restriction of the map $\Pi_1 \colon \CC^{n+p+1}
\to \CC$, $\Pi_1(\vecx,\vecy)=\x_1$, to $\setV(\H, \u, p)$, which is smooth
and equidimensional by assertion (a).

By Sard's Lemma \cite[Section 4.2]{SaSc13}, the set of critical values
of the restriction of $\Pi_1$ to $\setV(\H, \u, p)$ is finite. Hence,
its complement $\zarfiber \subset \CC$ is a non-empty Zariski open
set. We deduce that for $\fiber \in \zarfiber$, the Jacobian matrix of
$\f_\alpha(\H, \u, p)$ satisfies $\sfG$. 
\foorp

%
%
%
%
%


\section{Algorithm and correctness} \label{sec:algo}

In this section we present the algorithm, which is called {\sf LowRank\-Hankel},
and prove its correctness.

\subsection{Description} \label{ssec:algo:desc}

\paragraph*{Data representation}

The algorithm takes as {\it input} a couple $(\H, r)$, where $\H = (\H_0, \H_1, \ldots,
\H_n)$ encodes $m \times m$ Hankel matrices with entries in $\QQ$, defining the
linear matrix $\H(\vecx)$, and $0 \leq r \leq m-1$.

The {\it output} is represented by a rational parametrization, that is a polynomial system
\[
\q = (q_0(\T), q_1(\T), \ldots, q_n(\T), q(\T)) \subset \QQ[\T]
\]
of univariate polynomials, with $gcd(q,q_0)=1$. The set of solutions of
\[
\X_i-q_i(\T)/q_0(\T) = 0, \,\,i=1 \ldots n \qquad q(\T)=0
\]
is clearly finite and expected to contain at least one point per connected component
of the algebraic set $\setH_r \cap \RR^n$.

\paragraph*{Main subroutines and formal description}

We start by describing the main subroutines we use. 


\noindent {\sf ZeroDimSolve}. It takes as input a polynomial system
defining an algebraic set $\setZ\subset \CC^{n+k}$ and a subset of
variables $\vecx=(\X_1, \ldots, \X_n)$. 
If $\setZ$ is
finite, it returns a rational parametrization of the projection of
$\setZ$ on the $\vecx$-space else it returns an empty list.

\noindent {\sf ZeroDimSolveMaxRank}. It takes as input a polynomial
system $\f=(f_1, \ldots, f_c)$ such that $Z=\{\vecx \in
\CC^{n+k} \mid {\sf rank}(\jac\f(\vecx))=c\}$ is finite and a
subset of variables $\vecx=(\X_1, \ldots, \X_n)$ that endows $\CC^n$.  It
returns {\sf fail: the assumptions are not satisfied} if assumptions
are not satisfied, else it returns a rational parametrization of the
projection of $\setZ$ on the $\vecx$-space.

\noindent {\sf Lift}. It takes as input a rational parametrization of
a finite set $\setZ \subset \CC^N$ and a number $\fiber \in \CC$, and
it returns a rational parametrization of $\{(\fiber, \bfx) \, : \,
\bfx \in \setZ\}$. 

\noindent {\sf Union}. It takes as input two rational parametrizations
encoding finite sets $\setZ_1, \setZ_2$ and it returns a rational
parametrization of $\setZ_1 \cup \setZ_2$.

\noindent {\sf ChangeVariables}. It takes as input a rational
parametrization of a finite set $\setZ \subset \CC^N$ and a
non-singular matrix $M \in \GL(N, \CC)$. It returns a rational
parametrization of $\setZ^M$. 



The algorithm {\sf LowRankHankel} is recursive, and it
assumes that its input $\H$ satisfies Property $\sfG$.





${\sf LowRankHankel}(\H,r)$:
\begin{enumerate}
\item \label{step:rec:1} If $n < 2m-2r-1$ then return $[\,]$. 
\item\label{step:rec:choice1} Choose randomly $M\in \GL(n, \QQ)$,  $\fiber \in \QQ$ and
  $\u_p \in \QQ^{p+1}$, $\v_p \in \QQ^{2m-p}$ for $0\leq p \leq r$. 
\item \label{step:rec:2} If $n = 2m-2r-1$ then return ${\sf ZeroDimSolve}(\f(\H,\u_r, r), \vecx))$.
\item Let $\sfP={\sf ZeroDimSolve}(\lagrange(\f(\H,\u_r,r), \v))$
\item \label{step:rec:3} If $\sfP=[]$ then for $p$ from 0 to $r$ do
  \begin{enumerate}
    \item $\sfP'={\sf ZeroDimSolveMaxRank}(\lagrange(\H^M, \u_p, \v_p), \vecx)$;
    \item $\sfP = {\sf Union}(\sfP, \sfP')$
  \end{enumerate}
\item \label{step:rec:5} ${\sf Q}={\sf Lift}({\sf LowRankHankel}({\sf Subs}(\X_1=\fiber, \H^M),r), \fiber)$;
\item \label{step:rec:6} return({\sf ChangeVariables}({\sf Union}($\sfQ, \sfP$), $M^{-1}$)).
\end{enumerate}

\subsection{Correctness} \label{ssec:algo:corr} \label{sssec:algo:prelimresult}


The correctness proof is based on the two following results that are
proved in Sections \ref{sec:dimension} and \ref{sec:closedness}.

The first result states that when the input matrix $H$ satisfies
$\sfG$ and that, for a generic choice of $M$ and $\v$, and for all
$0 \leq p \leq r$, the set of solutions $(\vecx, \vecy, \vecz)$ to $\bell(\H^M, \u,
\v, p)$ at which $\rank\,\tilde{\H}(x)=p$ is finite and contains
$\crit(\pi_1, \incidencereg(\H^M,\u, p))$.
\begin{proposition} \label{prop:dimension}\label{PROP:DIMENSION} Let
  $\zarH$ be the set defined in Proposition \ref{prop:regularity} and
  let $\H \in \zarH$ and $\u \in \QQ^{p+1}-\{\mathbf{0}\}$ for $0 \leq p
  \leq r$. There exist non-empty Zariski open sets $\zarM_1 \subset
  \GL(n,\CC)$ and $\zarV \subset \CC^{2m-p}$ such that if $M \in
  \zarM_1 \cap \QQ^{n \times n}$ and $\v \in \zarV \cap \QQ^{2m-p}$,
  the following holds:
\begin{itemize}
\item[(a)] $\lagrange(\H^M, \u, \v, p)$ satisfies $\sfG$ over
  $\incidencereg(H^M, u, p)$;
\item[(b)] the projection of $\reg(\lagrange(\H^M, \u, \v, p))$ on the
  $(\vecx,\vecy)$-spa\-ce contains $\crit(\Pi_1, \incidencereg(\H^M, \u,
  p))$
\end{itemize}
\end{proposition}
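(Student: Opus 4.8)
The plan is to reduce both statements to a parametric transversality argument governing the incidence between the Lagrange-system variety and the stratum where $\rank\,\tilde{\H}(\vecx)=p$. First I would fix $\H\in\zarH$ and $\u\in\QQ^{p+1}-\{\mathbf 0\}$, so that by Proposition~\ref{prop:regularity}(a) the incidence variety $\incidence(\H^M,\u,p)=\zeroset{\f(\H^M,\u,p)}$ is smooth and equidimensional of dimension $n-2m+2p+1$ (the change of variables $M$ being harmless here, as $M\in\GL(n,\CC)$). On the open locus $\incidencereg(\H^M,\u,p)$ where moreover $\rank\,\tilde{\H}(\vecx)=p$, the linear matrix $\tilde{\H}(\vecx)$ has a kernel of the expected dimension, and the critical points $\crit(\Pi_1,\incidencereg(\H^M,\u,p))$ are cut out by $\f$ together with the vanishing of the maximal minors of $\jac_1\f$ (the Jacobian with its first column removed). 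The Lagrange map $\lagrange(\H^M,\u,\v,p)$ is designed precisely so that its zero set $\setZ(\H^M,\u,\v,p)$, projected to $(\vecx,\vecy)$-space, recovers these critical points: the equations $\vecz'\jac_1\f=0$ and $\v'\vecz-1=0$ force $\jac_1\f$ to be rank-deficient along a direction normalized by $\v$. Showing (b) then amounts to checking that, for generic $\v$, every critical point $(\vecx,\vecy)$ lifts to a point $(\vecx,\vecy,\vecz)\in\reg(\lagrange(\H^M,\u,\v,p))$; the genericity of $\v$ is needed to avoid the degenerate normalization $\v'\vecz=0$ on the (finitely many candidate) kernel directions, and to make the lift land in the smooth locus.

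For assertion (a) — that $\lagrange(\H^M,\u,\v,p)$ satisfies $\sfG$ over $\incidencereg$ — I would run a Thom weak transversality / parametric Sard argument, mirroring the proof of Proposition~\ref{prop:regularity} but now with $M$ (and possibly $\v$) as the varying parameters rather than $\H$. Concretely, set up the global map sending $(\vecx,\vecy,\vecz,M)$ (over the open set where $(\vecx,\vecy)\in\incidencereg$, i.e.\ $\rank\,\tilde\H(\vecx)=p$ and $\vecy\neq\mathbf 0$) to the value of $\lagrange(\H^M,\u,\v,p)$, and verify that $\mathbf 0$ is a regular value of this global map. The key computation is that the partial derivatives with respect to the entries of $M$ surject onto the relevant tangent directions; here one exploits that $\tilde\H(M\vecx)\vecy$ depends on $M$ through $M\vecx$ in a way that, on the stratum $\rank\,\tilde\H(\vecx)=p$, spans enough directions — this is exactly the place where the Hankel structure and the rank-$p$ hypothesis are used, since off this stratum the minors of $\jac_1\f$ do not behave well. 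Once $\mathbf 0$ is a regular value of the global map, Thom's theorem yields a non-empty Zariski open $\zarM_1\subset\GL(n,\CC)$ (intersected with an analogous open set $\zarV\subset\CC^{2m-p}$ extracted the same way for the $\v$-normalization) such that for $M\in\zarM_1$, $\v\in\zarV$, the section map has $\mathbf 0$ as a regular value, i.e.\ $\jac(\lagrange(\H^M,\u,\v,p))$ has maximal rank at every point of $\setZ(\H^M,\u,\v,p)$ lying over $\incidencereg$ — which is precisely property $\sfG$ over $\incidencereg$.

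With (a) in hand, assertion (b) follows by a dimension count combined with the explicit form of the equations: $\sfG$ over $\incidencereg$ makes $\reg(\lagrange)$ smooth of the expected codimension $n+2m+1$ in $\CC^{n+2m+1}$, hence a finite set, and one checks that a point of $\crit(\Pi_1,\incidencereg(\H^M,\u,p))$ — where $\jac_1\f$ drops rank on $\incidence$, so admits a nonzero left-kernel vector $\vecz_0$ — lifts by rescaling $\vecz_0$ to satisfy $\v'\vecz-1=0$ (possible for generic $\v$), landing in $\reg(\lagrange)$ by (a). I expect the main obstacle to be the surjectivity computation in the transversality step: verifying that varying $M$ generates all required tangent directions to the Lagrange variety is delicate because $\lagrange$ involves the minors of $\jac_1\f$ implicitly (through the $\vecz'\jac_1\f$ block), so one must differentiate the bilinear pairing $\vecz'\jac_1\f$ in $M$ and argue that, on the rank-$p$ stratum, this pairing's $M$-derivative is nondegenerate — likely by reducing to the rectangular Hankel matrix $\tilde\H$ and invoking the straightening-law identification of minor ideals from \cite{conca1998straightening} already recalled in Section~\ref{sec:prelim}. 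The auxiliary openness in $\v$ and the fact that $\zarM_1,\zarV$ can be taken independent of the (finitely many) critical points are routine once the regular-value statement is established.
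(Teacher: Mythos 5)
Your high-level plan (parametric transversality to get $\sfG$ over the rank-$p$ stratum, then a generic normalization $\v'\vecz=1$ to lift critical points into $\reg(\lagrange)$) is indeed the strategy of the paper, but the two steps that carry all the difficulty are not actually established, and in both places your proposed route has a real problem. First, the regular-value verification. You propose to vary the whole matrix $M$ and to argue that the $M$-derivatives of $\lagrange(\H^M,\u,\v,p)$ surject ``because $\tilde\H(M\vecx)\vecy$ spans enough directions on the rank-$p$ stratum, using the Hankel structure and the straightening law''. This is exactly the step you flag as the main obstacle, and no argument is given for it; note that varying $M$ perturbs simultaneously the incidence equations $\f(\H^M,\u,p)$ and the bilinear block $\vecz'\jac_1\f$ (which involves second derivatives when you differentiate the Lagrange rows in $\vecx,\vecy$), so the surjectivity claim is genuinely nontrivial as stated, and the straightening-law identification of minor ideals plays no role in it. The paper avoids this entirely: only the first row $\w$ of $M^{-1}$ matters (critical points of $\Pi_1$ on $\setV(\H^M,\u,p)$ correspond to critical points of $\pi_\w$ on $\setV(\H,\u,p)$, via the identity displayed in Section~\ref{ssec:dimlag:proof}), so one takes $(\w,\tilde\v)$, not $M$, as the transversality parameters; moreover one first passes to the charts $O_N$ where a $p\times p$ minor of $\tilde\H$ is invertible and rewrites the incidence system by a Schur-complement elimination (Lemma~\ref{lemma:local:incidence}), after which some multipliers are forced to vanish and the derivatives with respect to $\w$ and $\tilde\v_\ell$ produce explicit full-rank blocks, making the regular-value check immediate. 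Defining $\zarM_1$ as the set of $M$ whose inverse has first row in $\zarW$ then transfers the statement back to $\Pi_1$ and $\H^M$.

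Second, your argument for (b) has a circularity. To choose one $\v$ that works for \emph{all} critical points you need the bad set of $\v$ (those annihilating the left-kernel of $\jac_1\f$ at some critical point, i.e.\ missing the corresponding fiber in the $\vecz$-space) to be contained in a proper closed subset, which requires knowing beforehand that $\crit(\Pi_1,\incidencereg(\H^M,\u,p))$ is finite. You assert this finiteness as ``routine'' and elsewhere derive finiteness of the Lagrange solutions from (a) — but the passage from finiteness of $\L_p$ to finiteness of the critical locus uses precisely the lift you are trying to construct, hence presupposes the generic choice of $\v$. In the paper this is resolved before $\v$ is chosen: for $\w\in\zarW$ the variety $q_\w^{-1}(\mathbf{0})$ (Lagrange system without the normalization) is empty or equidimensional of dimension $1$, and by homogeneity in $\vecz$ together with the Theorem on the Dimension of Fibers its projection to the $(\vecx,\vecy)$-space — which contains $\crit(\pi_\w,O_N\cap\incidencereg(\H,\u,p))$ — is finite; only then is $\zarV$ built as the finite intersection of the open sets $\zarV_{(\vecx,\vecy),N}$ guaranteeing that the hyperplane $\v'\vecz=1$ meets each one-dimensional fiber $\pi_{\vecx,\vecy}^{-1}(\vecx,\vecy)$. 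Without this finiteness step, ``rescaling $\vecz_0$ for generic $\v$'' does not yield a single Zariski-open $\zarV$ valid for the whole critical locus.
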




\begin{proposition}\label{prop:closedness}\label{PROP:CLOSEDNESS}
  Let $\H \in \zarH$, $0\leq p\leq r$ and $d_p = n-2m+2p+1$ and
  $\setC$ be a connected component of $\setH_p \cap \RR^n$.  Then
  there exist non-empty Zariski open sets $\zarM_2 \subset \GL(n,\CC)$
  and $\zarU \subset \CC^{p+1}$ such that for any $M \in \zarM_2 \cap
  \QQ^{n \times n}$, $\u \in \zarU \cap \QQ^{p+1}$, the following
  holds:
  \begin{itemize}
  \item[(a)] for $i = 1, \ldots, d_p$, $\pi_i(\setC^M)$ is closed;
  \item[(b)] for any $\fiber \in \RR$ in the boundary of
    $\pi_1(\setC^M)$,  $\pi_1^{-1}(\fiber) \cap \setC^M$
    is finite;
  \item[(c)] for any $\vecx\in \pi_1^{-1}(\fiber) \cap \setC^M$ and $p$
    such that $\rank \, \tilde{\H}_p(\vecx)=p$, there exists $(\vecx, \vecy) \in
    \RR^n \times \RR^{p+1}$ such that $(\vecx,\vecy) \in \setV (\H^\M, \u,
    p)$.
  \end{itemize}
\end{proposition}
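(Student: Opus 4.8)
\emph{Proof sketch.} The plan is to prove (a) and (b) as statements of real semialgebraic geometry about generic coordinate projections of the determinantal real algebraic set $\setH_p\cap\RR^n$, and to derive (c) from (b) together with the Hankel description of the kernel. First I fix the connected component $\setC$ of $\setH_p\cap\RR^n$: it is closed, and open in $\setH_p\cap\RR^n$ since a real algebraic set is locally connected, so $\dim\setC^M=\dim\setC\le\dim\setH_p\le d_p$ by Proposition~\ref{prop:regularity}(b). The sought sets $\zarM_2$ and $\zarU$ will be intersections of the non-empty Zariski-open sets constructed below with those already furnished by Propositions~\ref{prop:regularity} and \ref{prop:dimension}.

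\emph{Part (a).} Suppose $\pi_i(\setC^M)$ is not closed. Pick $\fiber$ in its Euclidean closure but not in it; by semialgebraic curve selection (also at infinity) there is a semialgebraic arc $\gamma\colon(0,\varepsilon)\to\setC^M$ whose first $i$ coordinates converge to $\fiber$ while $\|\gamma(t)\|\to\infty$ as $t\to0$. A Puiseux expansion of $\gamma$ at $t=0$ yields a limit direction $v=\lim_{t\to0}\gamma(t)/\|\gamma(t)\|\neq\mathbf 0$, which is a direction at infinity of $\setC^M$ and which, the first $i$ coordinates of $\gamma$ staying bounded, lies in the subspace $\{v_1=\cdots=v_i=0\}$. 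The set of directions at infinity of $\setC$ is an intrinsic algebraic set of dimension at most $\dim\setC-1\le d_p-1$; hence I choose $M$ in a non-empty Zariski-open $\zarM^{(a)}$ so that the directions at infinity of $\setC^M$ are in general position with respect to each $\{v_1=\cdots=v_i=0\}$, $1\le i\le d_p$, so that no such $v$ exists. This excludes the offending arc for $i\ge\dim\setC$; the indices $i<\dim\setC$ follow by the same argument together with a downward recursion on $i$ --- precisely the ``geometric pattern'' of \cite{SaSc03} (see also \cite{SaSc13}), which I would invoke or reprove in this case.

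\emph{Parts (b) and (c).} By (a), $\pi_1(\setC^M)$ is a closed interval; let $\fiber$ be one of its (at most two) endpoints, so $\fiber$ is attained and $\pi_1$ is locally extremal on $\setC^M$ at every point of $F:=\pi_1^{-1}(\fiber)\cap\setC^M$. If $\vecx\in F$ and $q:=\rank\,\H^M(\vecx)$, then a neighbourhood of $\vecx$ in the rank-$q$ stratum $\{\rank\,\H^M(\vecx)=q\}$ (which is smooth, cf.\ \cite{bruns1988determinantal}) is contained in $\setC^M$ --- $\setC^M$ being open in $\setH_p^M\cap\RR^n$ --- so $\vecx\in\crit(\pi_1,\{\rank\,\H^M(\vecx)=q\})$. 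For generic $M$ the latter set is finite for every $0\le q\le p$: for $q\ge1$ this is the finiteness of the critical locus of $\pi_1$ on the rank strata established in Proposition~\ref{prop:dimension}, and for $q=0$ the stratum is either empty, a point, or a linear subspace on which $\pi_1$ is non-constant for generic $M$ (hence critical-point-free). So $F\subseteq\bigcup_{0\le q\le p}\crit(\pi_1,\{\rank\,\H^M(\vecx)=q\})$ is finite, which proves (b), and I set $\zarM_2=\zarM^{(a)}\cap\zarM^{(b)}$, where $\zarM^{(b)}$ is the non-empty Zariski-open set over which these finiteness statements hold. For (c), take $\vecx\in F$ with $\rank\,\tilde{\H}(M\vecx)=p$; since $\tilde{\H}(M\vecx)$ has $p+1$ columns its kernel is a line, spanned by a real vector $\vecy_0=\vecy_0(\vecx)\neq\mathbf 0$ (the matrix being real), and then $(\vecx,\vecy_0/(\u'\vecy_0))\in\RR^n\times\RR^{p+1}$ lies in $\setV(\H^M,\u,p)$ provided $\u'\vecy_0\neq0$. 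Running over the finitely many components $\setC$, the finitely many endpoints $\fiber$, and the finitely many points of each $F$, only finitely many lines $\CC\vecy_0$ arise, and I take $\zarU\subset\CC^{p+1}$ to be the complement of the union of the hyperplanes $\{\u:\u'\vecy_0=0\}$ attached to them.

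\emph{Main obstacle.} The crux is Part (a): controlling the behaviour at infinity of the \emph{singular, stratified} determinantal real algebraic set $\setH_p\cap\RR^n$, and showing that a generic linear change of coordinates makes every relevant coordinate projection of a connected component closed. This is where the dimension bounds of Proposition~\ref{prop:regularity} are indispensable and where the smooth-case machinery of \cite{SaSc03} has to be transposed. Parts (b) and (c) are then short consequences, (c) being the point at which the Hankel structure of the kernel reduces the existence of a real lift of a rank-$p$ point to a single non-degeneracy condition on $\u$.
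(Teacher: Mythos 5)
Your part (c) matches the paper's argument (choose $\u$ outside the finitely many hyperplanes $\u'\vecy_0=0$, i.e.\ outside the row space of $\tilde{\H}(\vecx)$, for the finitely many boundary points), but part (a) — which you yourself identify as the crux — is not actually proved, and this is a genuine gap. Your direction-at-infinity/general-position argument only yields properness (hence closedness) of $\pi_i$ restricted to $\setC^M$ when $i \geq \dim \setC$, as your own dimension count shows: for $i < \dim\setC$ the generic fibers of $\pi_i$ on $\setC^M$ are positive-dimensional and unbounded, no choice of $M$ makes the projection proper, and closedness must come from a different mechanism, namely that the boundary of $\pi_i(\setC^M)$ is contained in $\pi_i(\mathcal{O}_{i-1}(\setH_p^M)\cap\setC^M)$, where the sets $\mathcal{O}_j$ (built from lower-dimensional equidimensional components, singular loci and critical loci of the $\pi_{j+1}$) are in Noether position with respect to $\X_1,\ldots,\X_j$ — Properties $\sfP$ and $\sfQ$ of \cite[Section 5]{HNS2014}. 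Your ``downward recursion on $i$ \ldots which I would invoke or reprove'' is precisely this missing content; the reference you lean on, \cite{SaSc03}, treats smooth equidimensional sets, whereas $\setH_p$ is singular, so citing it does not close the gap. The paper closes it by invoking \cite[Propositions 15--17 and Lemma 19]{HNS2014}, which extend that machinery to arbitrary (singular) algebraic sets; note that part (b) is also obtained there directly from this machinery (\cite[Lemma 19]{HNS2014}), and the case $i=1 < \dim\setC$ is exactly the one the algorithm uses whenever $d_p\geq 2$, so it cannot be dismissed as a corner case.

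Your alternative route to (b) via rank strata is plausible but has two unjustified steps. First, smoothness of the stratum $\{\vecx:\rank\H^M(\vecx)=q\}$ does not follow from \cite{bruns1988determinantal}, which concerns generic determinantal varieties, not this Hankel linear section; under Property $\sfG$ it can be recovered from the local description of $\incidencereg(\H,\u,q)$ (Lemma \ref{lemma:local:incidence} exhibits $\setH_q$ as a graph over the smooth incidence variety on the locus where a $q\times q$ minor is invertible), but that argument needs to be made. Second, Proposition \ref{prop:dimension} gives finiteness of $\crit(\Pi_1,\incidencereg(\H^M,\u,q))$ inside the incidence variety in $(\vecx,\vecy)$-space, not of $\crit(\pi_1,\{\rank=q\})$ in $\vecx$-space; you need the (generically one-to-one, for $\u$ generic) correspondence between the rank-exactly-$q$ stratum and $\incidencereg(\H^M,\u,q)$ to transfer finiteness, together with the standard step that a local minimizer of $\pi_1$ on the real trace of a complex stratum which is smooth at that real point is a critical point of the complex restriction. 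These are fixable, but as written (b) rests on (a) plus two citations that do not quite say what is needed.
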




Our algorithm is probabilistic and its correctness depends on the
validity of the choices that are made at Step
\ref{step:rec:choice1}. We make this assumption that we formalize
below.


We need to distinguish the choices of $M, \u$ and $\v$ that are made
in the different calls of {\sf LowRankHankel}; each of these parameter
must lie in a non-empty Zariski open set defined in Propositions
\ref{prop:regularity}, \ref{prop:dimension} and \ref{prop:closedness}.

We assume that the input matrix $\H$ satisfies $\sfG$; we denote it by
$\H^{(0)}$, where the super script indicates that no recursive call has been
made on this input; similarly $\fiber^{(0)}$ denotes the choice of
$\fiber$ made at Step \ref{step:rec:choice1} on input
$\H^{(0)}$. Next, we denote by $\H^{(i)}$ the input of {\sf
  LowRankHankel} at the $i$-th recursive call and by
$\zarfiber^{(i)}\subset \CC$ the non-empty Zariski open set defined in
Proposition \ref{prop:regularity} applied to $H^{(i)}$. Note that if
$\fiber^{(i)}\in \zarfiber^{(i)}$, we can deduce that $\H^{(i+1)}$
satisfies $\sfG$.

Now, we denote by $\zarM_1^{(i)},\zarM_2^{(i)}$ and
$\zarU^{(p,i)},\zarV^{(p,i)}$ the open sets defined in Propositions
\ref{prop:regularity}, \ref{prop:dimension} and \ref{prop:closedness}
applied to $H^{(i)}$, for $0 \leq p \leq r$ and where $i$ is the depth
of the recursion.

Finally, we denote by $M^{(i)} \in \GL(n, \QQ)$, $\u^{(i)}_p\in
\QQ^{p+1}$ and $\v^{(i)}_p$, for $0 \leq p \leq r$, respectively the
matrix and the vectors chosen at Step \ref{step:rec:choice1} of the
$i$-th call of ${\sf LowRankHankel}$.

\noindent {\bf Assumption $\sfH$}. We say that $\sfH$
is satisfied if $M^{(i)}$, $\fiber^{(i)}$, $\u^{(i)}_p$ and $\v^{(i)}_p$ 
satisfy:
\begin{itemize}
\item $M^{(i)} \in  (\zarM_1^{(i)} \cap
  \zarM_2^{(i)}) \cap \QQ^{i \times i}$; 
\item $\fiber^{(i)} \in \zarfiber^{(i)}$. 
\item $\u^{(i)}_p \in \zarU^{(p,i)} \cap
  \QQ^{p+1}-\{\mathbf{0}\}$, for $0 \leq p \leq r$;
\item $\v_p^{(i)} \in \zarV^{(p,i)} \cap \QQ^{2m-p}-\{\mathbf{0}\}$ for  $0 \leq p \leq r$;
\end{itemize}


\begin{theorem}
  Let $\H$ satisfy $\sfG$. Then, if
  $\sfH$ is satisfied, ${\sf LowRankHankel}$ with input $(\H, r)$, returns a
  rational para\-met\-rization that encodes a finite algebraic set in $\setH_r$ meeting
  each connected component of $\setH_r \cap \RR^n$.
\end{theorem}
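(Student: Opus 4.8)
The strategy is a standard induction on $n$ following the geometric pattern of \cite{SaSc03}, with the recursion tree matching the recursive calls of {\sf LowRankHankel}. The base cases are Steps \ref{step:rec:1} and \ref{step:rec:2}: when $n < 2m-2r-1$, assertion (b) of Proposition \ref{prop:regularity} (applied with $p=r$) forces $\setH_r$ to be empty, so returning $[\,]$ is correct; when $n = 2m-2r-1$, the incidence variety $\setV(\H,\u_r,r)$ has dimension $0$ by Proposition \ref{prop:regularity}(a), hence {\sf ZeroDimSolve} returns a parametrization of its projection onto the $\vecx$-space, and one checks (using the Hankel-kernel structure recalled in Section \ref{sec:prelim}, namely that $\rank\,\H(\vecx)\le r$ is equivalent to the existence of a nonzero $\vecy$ with $\tilde\H(\vecx)\vecy=0$, normalized via $\u_r'\vecy=1$) that this projection is exactly $\setH_r$, which is already finite.

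For the inductive step, assume $\sfH$ holds and that the statement is true for inputs in $n-1$ variables; after the change of variables by $M=M^{(0)}$ one works with $\H^M$, and it suffices to prove that {\sf Union}$(\sfQ,\sfP)$ meets every connected component $\setC$ of $\setH_r^M\cap\RR^n$ (the final {\sf ChangeVariables} by $M^{-1}$ then transports this back). Fix such a $\setC$; let $p$ be the generic rank of $\H$ on $\setC$, so $\setC$ is a connected component of $\setH_p^M\cap\RR^n$ on whose regular-rank locus we may apply Propositions \ref{prop:dimension} and \ref{prop:closedness}. By Proposition \ref{prop:closedness}(a), $\pi_1(\setC)$ is a closed subset of $\RR$, hence it is either all of $\RR$ or has a boundary point. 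In the first case, $\setC$ projects dominantly onto the $\X_1$-line, so $\setC\cap\pi_1^{-1}(\fiber^{(0)})$ is nonempty; since $\fiber^{(0)}\in\zarfiber^{(0)}$ guarantees $\H^{(1)}$ satisfies $\sfG$, and since $\setC\cap\pi_1^{-1}(\fiber^{(0)})$ is a union of connected components of $\setH_r\cap\RR^{n}\cap\{\X_1=\fiber^{(0)}\}$ identified with a subset of $\setH_r^{(1)}\cap\RR^{n-1}$, the recursive call together with {\sf Lift} produces a point in it. In the second case, pick a boundary point $\fiber^\star$ of $\pi_1(\setC)$; by Proposition \ref{prop:closedness}(b) the fiber $\setC\cap\pi_1^{-1}(\fiber^\star)$ is finite and nonempty, and one shows by a standard semialgebraic/local-extremum argument that each of its points is a critical point of the restriction of $\pi_1$ to $\setH_p\cap\RR^n$; restricting to a point $\vecx$ with $\rank\,\tilde\H_p(\vecx)=p$, Proposition \ref{prop:closedness}(c) lifts it to $(\vecx,\vecy)\in\incidencereg(\H^M,\u_p,p)$, so $(\vecx,\vecy)\in\crit(\Pi_1,\incidencereg(\H^M,\u_p,p))$. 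Then Proposition \ref{prop:dimension}(b) places $(\vecx,\vecy)$ in the projection of $\reg(\lagrange(\H^M,\u_p,\v_p,p))$, and Proposition \ref{prop:dimension}(a) says $\lagrange$ satisfies $\sfG$ there, so this locus is finite (zero-dimensional by the codimension count in the definition of Property $\sfG$) — whence {\sf ZeroDimSolveMaxRank} at Step \ref{step:rec:3}(a) does not fail and returns a parametrization of a finite set whose $\vecx$-projection contains $\vecx$. Taking {\sf Union} over $0\le p\le r$ and with $\sfQ$ covers $\setC$ either way.

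Two further points complete the argument. First, one must check that whenever a boundary point is hit, every point of the finite fiber that has smaller local rank $p'<p$ is still reached: this is why Step \ref{step:rec:3} loops over all $p$ from $0$ to $r$ and why Assumption $\sfH$ requires $\u_p^{(i)},\v_p^{(i)}$ in the relevant open sets for every $p$; the same Propositions \ref{prop:dimension}--\ref{prop:closedness} applied with that $p'$ do the job. Second, one verifies finiteness of the total output: each {\sf ZeroDimSolve}/{\sf ZeroDimSolveMaxRank} returns a finite set by the $\sfG$/zero-dimensionality guarantees above, {\sf Lift}, {\sf Union} and {\sf ChangeVariables} preserve finiteness, and the recursion depth is at most $n-(2m-2r-1)$, so the returned set is finite and contained in $\setH_r$ (every point produced satisfies $\tilde\H(\vecx)\vecy=0$ with $\vecy\neq 0$, hence $\rank\,\H(\vecx)\le r$).

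**Main obstacle.** The delicate point is the transition at a boundary value $\fiber^\star$: one needs that a boundary point of $\pi_1(\setC)$ genuinely yields a \emph{critical} point of $\pi_1$ on the (possibly singular) real set $\setH_p\cap\RR^n$ — not merely on a smooth stratum — and that at least one such point lies in the top-rank locus where Propositions \ref{prop:dimension} and \ref{prop:closedness}(c) apply. This is exactly what the genericity of $M$ in $\zarM_2$ (closedness of all coordinate projections, Proposition \ref{prop:closedness}(a)) and the rank-stratification of $\setH_r$ are designed to handle; marshalling these correctly, and making sure the recursion's "component of a fiber is a component in one fewer variable" bookkeeping is airtight, is the crux of the proof.
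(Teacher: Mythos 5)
Your overall skeleton matches the paper's proof: induction on the recursion depth, the two base cases handled via Proposition \ref{prop:regularity} and the genericity of $\u$, and the dichotomy ``$\pi_1(\setC^M)=\RR$ (handled by the recursive call and {\sf Lift}) versus $\pi_1(\setC^M)$ has a boundary point (handled by the Lagrange systems)''. However, there is a genuine gap exactly at the step you yourself flag as the crux, and it is not optional bookkeeping: you pass from ``$\beta$ is a boundary point of $\pi_1(\setC^M)$'' to ``the lift $(\vecx,\vecy)$ lies in $\crit(\Pi_1,\incidencereg(\H^M,\u_p,p))$'' via the intermediate assertion that each point of the fiber is a critical point of $\pi_1$ on $\setH_p\cap\RR^n$. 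That intermediate notion is not even defined in this setting ($\crit$ is defined in the paper only for smooth equidimensional sets, and $\setH_p$ is typically singular), and even granting some stratified reformulation, criticality downstairs does not by itself transfer to criticality of $\Pi_1$ on the incidence variety; appealing to the genericity of $M\in\zarM_2$ and ``the rank stratification'' does not supply the missing argument.

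The paper closes this gap by working entirely upstairs, with no new genericity: after lifting $\vecx$ to $(\vecx,\vecy)\in\incidencereg(\H^M,\u_p,p)$ via Proposition \ref{prop:closedness}(c), one takes the connected component $\cc'$ of the \emph{real incidence variety} $\setV(\H,\u,p)^M$ containing $(\vecx,\vecy)$ and proves that $\beta=\pi_1(\vecx,\vecy)$ lies on the boundary of $\pi_1(\cc')$: if some $(\widetilde\vecx,\widetilde\vecy)\in\cc'$ had $\pi_1<\beta$, a continuous semi-algebraic path in $\cc'$ joining the two points projects to a path in $\setH_p\cap\RR^n\subset\setH_r\cap\RR^n$ starting at $\vecx$, so $\widetilde\vecx\in\setC^M$, contradicting the minimality of $\beta$ on $\setC^M$. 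Then, because $\f(\H,\u,p)$ satisfies $\sfG$ (so $\setV$ is smooth and equidimensional), the Implicit Function Theorem shows that a boundary value of $\pi_1$ on $\cc'$ forces $(\vecx,\vecy)\in\crit(\Pi_1,\setV(\H,\u,p))$, hence $(\vecx,\vecy)\in\crit(\Pi_1,\incidencereg(\H^M,\u_p,p))$ since $\rank\,\tilde\H(\vecx)=p$ at that point; only then does Proposition \ref{prop:dimension}(b) produce $\vecz$ with $(\vecx,\vecy,\vecz)\in\reg(\lagrange(\H^M,\u_p,\v_p,p))$, which is finite by Proposition \ref{prop:dimension}(a) and the Jacobian criterion. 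Without this lift-then-boundary-then-IFT argument (or an equivalent), the central claim that Step \ref{step:rec:3} captures the boundary points remains unproved. A smaller looseness: in the base case $n=2m-2r-1$, the projection of $\setV(\H,\u_r,r)$ equals $\setH_r$ only for $\u_r$ generic (this is again Proposition \ref{prop:closedness}), not unconditionally as your phrasing suggests.
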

\begin{proof}
The proof is by decreasing induction on the depth of the recursion.

When $n<2m-2r-1$, $\setH_r$ is empty since the input $\H$ satisfies
$\sfG$ (since $\sfH$ is satisfied). In this case, the output defines the empty set.



When $n=2m-2r-1$, since $\sfH$ is satisfied, by Proposition \ref{PROP:REGULARITY},
either $\setH_r = \emptyset$ or $\dim\,\setH_r = 0$. Suppose $\setH_r = \emptyset$.
Hence $\setV_r = \emptyset$, since the projection of $\setV_r$ on
the $\vecx-$space is included in $\setH_r$. Suppose now that $\dim \setH_r
= 0$: Proposition \ref{prop:closedness} guarantees that the output
of the algorithm defines a finite set containing $\setH_r$.

Now, we assume that $n>2m-2r-1$; our induction assumption is that for
any $i\geq 1$ ${\sf LowRankHankel}(\H^{(i)}, r)$ returns a rational
parametrization that encodes a finite set of points in the algebraic
set defined by ${\sf rank}(\H^{(i)})\leq r$ and that meets every
connected component of its real trace.

Let $\cc$ be a connected component of $\setH_r \cap \RR^n$.  To keep
notations simple, we denote by $M \in \GL(n, \QQ)$, $\u_p$ and $\v_p$
the matrix and vectors chosen at Step \ref{step:rec:choice1} for
$0\leq p \leq r$. Since $\sfH$ holds one can apply Proposition
\ref{prop:closedness}. We deduce that the image $\pi_1(\cc^M)$ is
closed. Then, either $\pi_1(\cc^M) = \RR$ or it is a closed interval.

Suppose first that $\pi_1(\setC^M) = \RR$. Then for $\fiber \in \QQ$
chosen at Step \ref{step:rec:choice1}, $\pi_1^{-1}(\fiber) \cap
\setC^M \neq 0$. Remark that $\pi_1^{-1}(\fiber) \cap \cc^M$ is the
union of some connected components of $\setH^{(1)}_r \cap \RR^{n-1} =
\{\vecx=(\x_2, \ldots, \x_n) \in \RR^{n-1} : \rank \, \H^{(1)} (\vecx) \leq
r\}$.  Since $\sfH$ holds, assertion (c) of Proposition
\ref{prop:regularity} implies that $\H^{(1)}$ satisfies $\sfG$. We
deduce by the induction assumption that the parametrization returned
by Step \ref{step:rec:5} where ${\sf LowRankHankel}$ is called
recursively defines a finite set of points that is contained in
$\setH_r$ and that meets $\cc$.

Suppose now that $\pi_1(\cc^M) \neq \RR$. By Proposition
\ref{prop:closedness}, $\pi_1(\cc^M)$ is closed. Since $\cc^M$ is
connected, $\pi_1(\cc^M)$ is a connected interval, and since
$\pi_1(\cc^M) \neq \RR$ there exists $\beta$ in the boundary of
$\pi_1(\cc^M)$ such that $\pi_1(\cc^M) \subset [\beta, +\infty)$ or
$\pi_1(\cc^M) \subset (-\infty, \beta]$. Suppose without loss of
generality that $\pi_1(\cc^M) \subset [\beta, +\infty)$, so that
$\beta$ is the minimum value attained by $\pi_1$ on $\cc^M$.

Let $\vecx=(\beta, \x_2, \ldots, \x_n) \in \cc^M$, and suppose that
$\rank (\tilde{\H}(\vecx)) = p$. By Proposition \ref{prop:closedness} (assertion (c)),
there exists $\vecy \in \CC^{p+1}$ such that $(\vecx,\vecy) \in
\incidence(H, \u, p)$. Note that since $\rank (\tilde{\H}(\vecx)) = p$, we also deduce that 
$(\vecx,\vecy) \in
\incidencereg(H, \u, p)$.

We claim that there exists $\vecz \in \CC^{2m-p}$ such that $(\vecx,\vecy,\vecz)$
lies on $\reg(\lagrange(\H^M, \u, \v, p))$.

Since $\sfH$ holds, Proposition \ref{prop:dimension} implies that
$\lagrange(\H^M, \u, \v, p)$ satisfies $\sfG$ over $\incidencereg(H^M,
\u, p)$. Also, note that the Jacobian criterion implies that
$\reg(\lagrange(\H^M, \u, \v, p))$ has dimension at most $0$.
We conclude that the point $\vecx \in \cc^M$ lies on the finite set
encoded by the rational parametrization {\sf P} obtained at Step
\ref{step:rec:3} of ${\sf LowRankHankel}$ and we are done. 

It remains to prove our claim, i.e. there exists $\vecz \in \CC^{2m-p}$
such that $(\vecx,\vecy,\vecz)$ lies on $\reg(\lagrange(\H^M, \u, \v, p))$.

Let $\cc' $ be the connected component of $\setV(\H, \u, p)^M \cap
\RR^{n+m(m-r)}$ containing $(\vecx,\vecy)$. We first prove that $\beta =
\pi_1(\vecx,\vecy)$ lies on the boundary of $\pi_1(\cc')$. Indeed, suppose
that there exists $(\widetilde{\vecx},\widetilde{\vecy}) \in \cc'$ such that
$\pi_1(\widetilde{\vecx},\widetilde{\vecy}) < \beta$. Since $\cc'$ is
connected, there exists a continuous semi-algebraic map $\tau \colon
[0,1] \to \cc'$ with $\tau(0) = (\vecx,\vecy)$ and $\tau(1) =
(\widetilde{\vecx},\widetilde{\vecy})$. Let $\varphi: (\vecx, \vecy)\to \vecx$ be the
canonical projection on the $\vecx$-space.

Note that $\varphi \circ \tau$ is also continuous and semi-algebraic
(it is the composition of continuous semi-algebraic maps), with
$(\varphi \circ \tau)(0)=\vecx$, $(\varphi \circ \tau)(1)
=\widetilde{\vecx}$. Since $(\varphi \circ \tau)(\theta) \in \setH_p$ for
all $\theta \in [0,1]$, then $\widetilde{\vecx} \in \cc$. Since
$\pi_1(\widetilde{\vecx}) = \pi_1(\widetilde{\vecx}, \widetilde{\vecy}) <
\fiber$ we obtain a contradiction.  So $\pi_1(\vecx,\vecy)$ lies on the
boundary of $\pi_1(\cc')$.

By the Implicit Function Theorem, and the fact that $\f(\H, \u, p)$
satisfies Property ${\sfG}$, one deduces that $(\vecx,\vecy)$ is a critical
point of the restriction of $\Pi_1: (\x_1, \ldots, \x_n, \y_1, \ldots,
\y_{r+1})\to \x_1$ to $\setV(\H, \u, p)$. 

Since ${\sf rank}(\H^M(\vecx))=p$ by construction, we deduce that
$(\vecx,\vecy)$ is a critical point of the restriction of $\Pi_1$ to
$\incidencereg(\H^M, \u, p)$ and that, by Proposition
\ref{prop:dimension}, there exists $\vecz \in \CC^{2m-p}$ such
that $(\vecx,\vecy,\vecz)$ belongs to the set $\reg(\lagrange(\H^M, \u, \v, p))$, as claimed.
\end{proof}

\section{Degree bounds and complexity} \label{ssec:algo:complexity}

We first remark that the complexity of subroutines ${\sf Union}$,
${\sf Lift}$ and ${\sf ChangeVariables}$ (see \cite[Chap. 10]{SaSc13})
are negligible with respect to the complexity of ${\sf
  ZeroDimSolveMaxRank}$.
Hence, the complexity of ${\sf LowRankHankel}$ is at most $n$ times
the complexity of ${\sf ZeroDimSolveMaxRank}$, which is computed
below.

Let $(\H,r)$ be the input, and let $0 \leq p \leq r$.
We estimate the complexity of ${\sf ZeroDimSolveMaxRank}$ with input
$(\H^M, \u_p, \v_p)$. It depends on the algorithm used to solve
zero-dimensional polynomial systems. We choose the one of
\cite{jeronimo2009deformation} that can be seen as a symbolic homotopy
taking into account the sparsity structure of the system to solve.
More precisely, let $\mathbf{p}\subset \QQ[x_1, \ldots, x_n]$ and
$s\in \QQ[x_1, \ldots, x_n]$ such that the common complex solutions of
polynomials in $\mathbf{p}$ at which $s$ does not vanish is finite.
The algorithm in \cite{jeronimo2009deformation} builds a system
$\mathbf{q}$ that has the same monomial structure as $\mathbf{p}$ has
and defines a finite algebraic set. Next, the homotopy system
$\mathbf{t} = t\mathbf{p}+(1-t)\mathbf{q}$ where $t$ is a new
variable is built. The system $\mathbf{t}$ defines a $1$-dimensional
constructible set over the open set defined by $s\neq 0$ and for
generic values of $t$. Abusing notation, we denote by $Z(\mathbf{t})$
the curve defined as the Zariski closure of this constructible set.

Starting from the solutions
of $\mathbf{q}$ which are encoded with a rational parametrization, the
algorithm builds a rational parametrization for the solutions of
$\mathbf{p}$ which do not cancel $s$. Following
\cite{jeronimo2009deformation}, the algorithm runs in time
$\softO(Ln^{O(1)} \delta \delta')$ where $L$ is the complexity of
evaluating the input, $\delta$ is a bound on the number of isolated
solutions of $\mathbf{p}$ and $\delta'$ is a bound on the degree of
$Z(\mathbf{t})$ defined by $\mathbf{t}$.

Below, we estimate these degrees when the input is a Lagrange system
as the ones we consider.

\noindent
{\bf Degree bounds.} We let $((\tilde{\H}
\,\vecy)',{\u_p}'\vecy-1)$, with $\vecy = (\Y_1, \ldots, \Y_{p+1})'$,
defining $\setV_p(\H,{\u_p})$. Since $\vecy \neq 0$, one can eliminate
w.l.o.g. $\Y_{p+1}$, and the linear form ${\u_p}'\vecy-1$, obtaining a
system $\tilde{\f} \in \QQ[\vecx,\vecy]^{2m-p-1}$.  We recall that if
$\vecx^{(1)}, \ldots, \vecx^{(c)}$ are $c$ groups of variables, and $f
\in \QQ[\vecx^{(1)}, \ldots, \vecx^{(c)}]$, we say that the
multidegree of $f$ is $(d_1, \ldots, d_c)$ if its degree with respect
to the group $\vecx^{(j)}$ is $d_j$, for $j=1, \ldots, c$.

Let $\bell=(\tilde{\f}, \tilde{\g}, \tilde{\h})$ be the corresponding
Lagrange system, where
$$
(\tilde{\g}, \tilde{\h}) = (\tilde{g}_1, \ldots, \tilde{g}_{n-1}, \tilde{h}_1,
\ldots, \tilde{h}_{p}) = \vecz' \jac_1 \tilde{\f}
$$
with $\vecz = [1, \Z_2, \ldots, \Z_{2m-p-1}]$ a non-zero vector of Lagrange
multipliers (we let $\Z_1 = 1$ w.l.o.g.). One obtains that $\bell$ is constituted by
\begin{itemize}
\item
$2m-p-1$ polynomials of multidegree bounded by $(1,1,0)$ with respect to $(\vecx,\vecy,\vecz)$,
\item
$n-1$ polynomials of multidegree bounded by $(0,1,1)$ with respect to $(\vecx,\vecy,\vecz)$,
\item 
$p$ polynomials of multidegree bounded by $(1,0,1)$ with respect to $(\vecx,\vecy,\vecz)$,
\end{itemize}
that is by $n+2m-2$ polynomials in $n+2m-2$ variables.

\begin{lemma} \label{lemma1}
With the above notations, the number of isolated solutions of $\zeroset{\bell}$ is at most
\[
\delta(m,n,p) = \sum_{\ell}\binom{2m-p-1}{n-\ell} \binom{n-1}{2m-2p-2+\ell} \binom{p}{\ell}
\]
where $\ell\in \{\max\{0,n-2m+p+1\}, \ldots, 
\min\{p,n-2m+2p+1\}\}$.
\end{lemma}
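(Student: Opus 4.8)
The plan is to bound the number of isolated solutions of $\zeroset{\bell}$ by applying the classical multihomogeneous B\'ezout bound to the system $\bell = (\tilde{\f}, \tilde{\g}, \tilde{\h})$, whose multidegree structure with respect to the three blocks of variables $(\vecx, \vecy, \vecz)$ has already been determined just before the statement. Recall that we have $n-1$ variables in the block $\vecx$ (after eliminating $\X_1$ is not done here — rather we still work in $\CC^{n+2m-2}$ with $\vecx = (\X_1, \ldots, \X_n)$ giving $n$... ), so first I would carefully pin down the cardinalities of the three variable groups: $n$ variables $\vecx$... — actually since we have $n+2m-2$ variables total split as stated into the three groups $\vecx$ (of size $n$ or $n-1$?), $\vecy$ (of size $p$ after eliminating $\Y_{p+1}$), and $\vecz$ (of size $2m-p-2$ after setting $\Z_1=1$). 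I would reconcile the counts: $2m-p-1$ equations of multidegree $(1,1,0)$, $n-1$ of multidegree $(0,1,1)$, and $p$ of multidegree $(1,0,1)$, totalling $n+2m-2$ equations in $n+2m-2$ variables, so the block sizes are $n_1 = $ (size of $\vecx$-block), $n_2$, $n_3$ with $n_1+n_2+n_3 = n+2m-2$.

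The core computation is then the standard one: the multihomogeneous B\'ezout number is the coefficient of $\theta_1^{n_1}\theta_2^{n_2}\theta_3^{n_3}$ in the product
\[
(\theta_1+\theta_2)^{2m-p-1}\,(\theta_2+\theta_3)^{n-1}\,(\theta_1+\theta_3)^{p},
\]
where $\theta_1, \theta_2, \theta_3$ are formal variables attached to the three blocks. Expanding each factor by the binomial theorem and collecting the coefficient of the appropriate monomial produces a triple sum, which collapses to a single sum once the two "diagonal" constraints coming from matching the exponents of $\theta_1$ and $\theta_3$ are imposed; introducing $\ell$ as the exponent of $\theta_3$ coming from the first... from the $(\theta_2+\theta_3)^{n-1}$ factor and chasing the linear relations among the remaining summation indices leaves exactly
\[
\delta(m,n,p) = \sum_{\ell}\binom{2m-p-1}{n-\ell}\binom{n-1}{2m-2p-2+\ell}\binom{p}{\ell}.
\]
The range of $\ell$ is dictated by requiring every binomial coefficient to have arguments in the valid range $0 \le k \le N$: the lower bound $\ell \ge \max\{0, n-2m+p+1\}$ comes from $\binom{2m-p-1}{n-\ell}$ needing $n-\ell \le 2m-p-1$ together with $\binom{n-1}{2m-2p-2+\ell}$ needing $2m-2p-2+\ell \ge 0$, and the upper bound $\ell \le \min\{p, n-2m+2p+1\}$ comes from $\binom{p}{\ell}$ and from $2m-2p-2+\ell \le n-1$.

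To turn the B\'ezout number into an actual bound on the number of isolated solutions, I would invoke the fact (see e.g. the multihomogeneous B\'ezout theorem, as used throughout the critical-point-method literature cited in the paper) that for any multihomogeneous system the number of isolated solutions in affine space is at most the multihomogeneous B\'ezout number associated to the multidegrees — this requires no genericity on $\bell$ itself, only the combinatorics of the degrees, which is exactly what we have recorded. The main obstacle is purely bookkeeping: getting the sizes of the three variable blocks right after the eliminations of $\Y_{p+1}$ and the normalization $\Z_1=1$, and then correctly reducing the triple sum to the claimed single sum with the stated index range — a sign error or an off-by-one in the block sizes propagates into wrong arguments of the binomials. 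I would double-check the final expression by testing it against a small case (e.g. $p=0$, where the bound should reduce to the ordinary affine B\'ezout-type count $\binom{2m-1}{n}\binom{n-1}{2m-2}$, consistent with a system of $2m-1$ bilinear equations in $(\vecx,\vecz)$) before declaring the lemma proved.
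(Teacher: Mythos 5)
Your proposal is correct and is essentially the paper's own argument: the paper invokes the multilinear B\'ezout bound of \cite[Proposition 11.1]{SaSc13}, namely the sum of the coefficients of $(s_\X+s_\Y)^{2m-p-1}(s_\Y+s_\Z)^{n-1}(s_\X+s_\Z)^{p}$ modulo $\left\langle s_\X^{n+1}, s_\Y^{p+1}, s_\Z^{2m-p-1}\right\rangle$, which in this square situation (blocks of sizes $n$, $p$, $2m-p-2$ and $n+2m-2$ equations) amounts exactly to extracting the coefficient of $s_\X^{n}s_\Y^{p}s_\Z^{2m-p-2}$ as you do, and then leaves the collapse to the single sum as a ``technical computation''. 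Your bookkeeping of the block sizes, the reduction to the stated single sum, and the index range forced by the binomial coefficients are all consistent with that proof.
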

\begin{proof}
By \cite[Proposition 11.1]{SaSc13}, this degree is bounded by the multilinear
B\'ezout bound $\delta(m,n,p)$ which is the sum of the coefficients of
\[
(s_\X+s_\Y)^{2m-p-1} (s_\Y+s_\Z)^{n-1} (s_\X+s_\Z)^{p} \in \QQ[s_\X,s_\Y,s_\Z]
\]
modulo $I = \left \langle s_\X^{n+1}, s_\Y^{p+1}, s_\Z^{2m-p-1} \right
\rangle$.  The conclusion comes straightforwardly by technical 
computations.
\end{proof}


With input $\bell$, the homotopy system $\mathbf{t}$ is constituted by
$2m-p-1, n-1$ and $p$ polynomials of multidegree respectively bounded
by $(1,1,0,1), (0,1,1,1)$ and $(1,0,1,1)$ with respect to
$(\vecx,\vecy,\vecz,t)$. 
We prove the following.

\begin{lemma} \label{lemma2}
$\deg \, \zeroset{\mathbf{t}} \in {O}(pn(2m-p) \delta(m,n,p))$.
\end{lemma}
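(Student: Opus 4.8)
The plan is to bound the degree of the homotopy curve $\zeroset{\mathbf{t}}$ by the multilinear B\'ezout bound associated to the multidegrees of the polynomials in $\mathbf{t}$, viewed now as a system in the $n+2m-1$ variables $(\vecx,\vecy,\vecz,t)$ that cuts out a $1$-dimensional set. Concretely, $\mathbf{t}$ consists of $2m-p-1$ polynomials of multidegree $(1,1,0,1)$, $n-1$ polynomials of multidegree $(0,1,1,1)$ and $p$ polynomials of multidegree $(1,0,1,1)$ with respect to the groups $(\vecx,\vecy,\vecz,t)$. By \cite[Proposition 11.1]{SaSc13} (the same multihomogeneous B\'ezout estimate used in Lemma~\ref{lemma1}), $\deg\,\zeroset{\mathbf{t}}$ is at most the sum of the coefficients of
\[
(s_\X+s_\Y+s_t)^{2m-p-1}(s_\Y+s_\Z+s_t)^{n-1}(s_\X+s_\Z+s_t)^{p}
\]
reduced modulo $I' = \langle s_\X^{n+1}, s_\Y^{p+1}, s_\Z^{2m-p-1}, s_t^{2}\rangle$, since $t$ is a single variable so $s_t^2\equiv 0$.

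The next step is to expand this product in powers of $s_t$: writing it as $A + s_t\,B$ modulo $s_t^2$, where $A$ is exactly the polynomial appearing in Lemma~\ref{lemma1} (so that the sum of its reduced coefficients is $\delta(m,n,p)$), and $B$ is the ``derivative'' term obtained by replacing one factor at a time by its $s_t$-part. Differentiating, $B$ is a sum of three groups of terms: $(2m-p-1)$ copies of $(s_\X+s_\Y)^{2m-p-2}(s_\Y+s_\Z)^{n-1}(s_\X+s_\Z)^{p}$, $(n-1)$ copies of $(s_\X+s_\Y)^{2m-p-1}(s_\Y+s_\Z)^{n-2}(s_\X+s_\Z)^{p}$, and $p$ copies of $(s_\X+s_\Y)^{2m-p-1}(s_\Y+s_\Z)^{n-1}(s_\X+s_\Z)^{p-1}$. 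The total degree bound is the sum of the reduced coefficients of $A$ plus those of $B$, i.e. $\delta(m,n,p) + (\text{sum of reduced coefficients of }B)$.

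The remaining work is to bound the contribution of $B$. Each of the three summands in $B$ has the same multilinear shape as the polynomial defining $\delta(m,n,p)$, but with one exponent decremented; reducing modulo $I$ and summing coefficients, each such summand has total coefficient sum at most $\delta(m,n,p)$ (decrementing an exponent only removes monomials or leaves them, it never creates new ones beyond the original support, and the reduction modulo the ideal is still governed by the same monomial basis). Hence the coefficient sum of $B$ is at most $\big((2m-p-1)+(n-1)+p\big)\,\delta(m,n,p) = (n+2m-2)\,\delta(m,n,p)$, giving $\deg\,\zeroset{\mathbf{t}} \le \delta(m,n,p)\,(1 + n + 2m - 2) \in O\big((n+2m)\,\delta(m,n,p)\big)$, which one then absorbs into the stated bound $O(pn(2m-p)\,\delta(m,n,p))$ using $n+2m \le pn(2m-p)$ in the relevant parameter range (where $p\ge 1$ and $n,m$ are bounded below so that the incidence variety is nonempty; the degenerate cases $p=0$ are handled separately, the system then having no $\vecz$-block of the third type and the bound being immediate). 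The main obstacle is the bookkeeping of the reduction modulo $I$ after decrementing exponents: one must check that decrementing does not move monomials out of the admissible range cut out by $I$ in a way that would invalidate the ``at most $\delta(m,n,p)$ per summand'' claim — this is the one place where a careful (but elementary) combinatorial argument, rather than a one-line estimate, is needed, and it is precisely analogous to the ``technical computations'' invoked at the end of Lemma~\ref{lemma1}.
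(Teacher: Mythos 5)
Your overall strategy is the same as the paper's: bound $\deg\,\zeroset{\mathbf{t}}$ by the multihomogeneous B\'ezout estimate of \cite[Proposition 11.1]{SaSc13}, reduce modulo $s_t^2$, and split the reduced product into the Lemma~\ref{lemma1} term plus three products in which one exponent is decremented. The gap is in the key estimate: you claim that each decremented product has reduced coefficient sum at most $\delta(m,n,p)$ because ``decrementing an exponent only removes monomials or leaves them''. This justification is false. The products are homogeneous, so decrementing one exponent lowers the total degree by one and the support changes entirely; modulo $I=\langle s_\X^{n+1},s_\Y^{p+1},s_\Z^{2m-p-1}\rangle$ the term $(s_\X+s_\Y)^{2m-p-1}(s_\Y+s_\Z)^{n-1}(s_\X+s_\Z)^{p}$ contributes only through the single corner monomial $s_\X^{n}s_\Y^{p}s_\Z^{2m-p-2}$, whereas each decremented product contributes through the three monomials one step below the corner, and there is no termwise domination. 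The claimed inequality actually fails: for $m=3$, $p=2$, $n=3$ one has $\delta(m,n,p)=16$, while the coefficient sum of $(s_\X+s_\Y)^{2m-p-2}(s_\Y+s_\Z)^{n-1}(s_\X+s_\Z)^{p}$ modulo $I$ equals $22$. Consequently your intermediate bound $O((n+2m)\,\delta(m,n,p))$ --- which is even stronger than the statement of the lemma --- is not established by your argument.

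The paper's proof confronts exactly this point. After writing the reduction as $P_1+P_2+P_3+P_4$, it decomposes each decremented contribution according to the three homogeneity cases (A), (B), (C) (the three near-corner monomials) and compares term by term with $\bX(\ell)$; in case (A) the comparison factor is the rational function $\Psi(\ell)=\frac{2m-2p-2+\ell}{n-(2m-2p-2+\ell)}$, whose maximum over the admissible range is $n-1$, so the best available per-case bound is of order $n\,\Delta(P_1)$ rather than $\Delta(P_1)$. Multiplying by the multiplicities $2m-p-1$, $n-1$ and $p$ then yields $\Delta(P_2)\in O(n(2m-p)\Delta(P_1))$ and the analogous bounds for $P_3,P_4$, hence the stated $O(pn(2m-p)\,\delta(m,n,p))$. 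So the place you flagged as requiring ``a careful combinatorial argument'' is precisely where the real work lies; the argument you sketch for it does not go through, and repairing it along the paper's lines leads to the paper's bound, not the stronger one you claim.
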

\begin{proof}[of Lemma \ref{lemma2}]
We use Multilinear B\'ezout bounds as in the proof of Lemma \ref{lemma1}.
The degree of $\zeroset{{\bf t}}$ is bounded by the sum of the coefficients
of
\[
(s_\X+s_\Y+s_{t})^{2m-p-1} (s_\Y+s_\Z+s_{t})^{n-1} (s_\X+s_\Z+s_{t})^{p}
\]
modulo $I = \left\langle s_\X^{n+1}, s_\Y^{p+1}, s_\Z^{2m-p-1}, s_{t}^2 \right\rangle
\subset \QQ[s_\X, s_\Y, s_\Z, s_{t}]$. Since the variable $s_t$ can appear
up to power $1$, the previous polynomial is congruent to $P_1+P_2+P_3+P_4$
modulo $I$, with
\begin{itemize}
\item[] $P_1 = (s_\X+s_\Y)^{2m-p-1} (s_\Y+s_\Z)^{n-1} (s_\X+s_\Y)^{p}$
\item[] $P_2 = (2m-p-1) s_t (s_\X+s_\Y)^{2m-p-2} (s_\Y+s_\Z)^{n-1} \,(s_\X+s_\Z)^{p}$,
\item[] $P_3 = (n-1) s_t (s_\Y+s_\Z)^{n-2} (s_\X+s_\Y)^{2m-p-1} (s_\X+s_\Z)^{p}$
\item[] $P_4 = p \, s_t (s_\X+s_\Z)^{p-1} (s_\X+s_\Y)^{2m-p-1} (s_\Y+s_\Z)^{n-1}.$
\end{itemize}
We denote by $\Delta(P_i)$ the contribution of $P_i$ to the previous
sum.

Firstly, observe that $\Delta(P_1) = \delta(m,n,p)$ (compare with the
proof of Lemma \ref{lemma1}). Defining
$\chi_1 = \max\{0,n-2m+p+1\}$ and $\chi_2 = \min\{p,n-2m+2p+1\}$,
one has $\Delta(P_1) = \delta(m,n,p) = \sum_{\ell = \chi_1}^{\chi_2}\bX(\ell)$
with
$\bX(\ell) = \binom{2m-p-1}{n-\ell} \binom{n-1}{2m-2p-2+\ell} \binom{p}{\ell}.$

Write now $P_2 = (2m-p-1)s_t\tilde{P}_2$, with $\tilde{P}_2 \in \QQ[\X,\Y,\Z].$
Let $\Delta(\tilde{P}_2)$ be the contribution of $\tilde{P}_2$, that is the sum
of the coefficients of $\tilde{P}_2$ modulo $I' = \left\langle s_\X^{n+1}, s_\Y^{p+1},
s_\Z^{2m-p-1} \right\rangle$, so that $\Delta(P_2) = (2m-p-1)\Delta(\tilde{P}_2)$. Then
\[
\Delta(\tilde{P}_2) = \sum_{i,j,\ell}\binom{2m-p-2}{i}\binom{n-1}{j}\binom{p}{\ell}
\]
where the sum runs in the set defined by the inequalities
\[
i + \ell \leq n, \,\,\, 2m-p-2-i+j \leq p, \,\,\, n-1-j+p-\ell \leq 2m-p-2.
\]
Now, since $\tilde{P}_2$ is homogeneous of degree $n+2m-3$, only three possible
cases hold:

{\it Case (A)}. $i + \ell=n$, $2m-p-2-i+j =p$ and $n-1-j+p-\ell =2m-p-3$. Here the contribution is
  $\delta_a = \sum_{\ell=\alpha_1}^{\alpha_2} \bY_a(\ell)$ with
  \[
  \bY_a(\ell) = \binom{2m-p-2}{n-\ell}\binom{n-1}{2m-2p-3+\ell}\binom{p}{\ell},
  \]
  and $\alpha_1 = \max\{0,n-2m+p+2\}, \alpha_2 = \min\{p,n-2m+2p+2\}.$
  Suppose first that $\ell$ is an admissible index for $\Delta(P_1)$ and $\delta_a$,
  that is $\max\{\chi_1,\alpha_1\}=\alpha_1 \leq \ell \leq \chi_2=\min\{\chi_2,\alpha_2\}$.
  Then:
  \begin{align*}
  \bY_a(\ell) & \leq \binom{2m-p-1}{n-\ell}\binom{n-1}{2m-2p-3+\ell}\binom{p}{\ell} = \\
              & = \Psi(\ell) \bX(\ell) \qquad \text{with} \, \Psi(\ell) = \frac{2m-2p-2+\ell}{n-(2m-2p-2+\ell)}.
  \end{align*}
  The rational function $\ell \longmapsto \Psi(\ell)$
  is piece-wise monotone (its first derivative is positive), and its unique possible
  pole is $\ell = n-2m+2p+2$. Suppose that this value is a pole for $\Psi(\ell)$.
  This would imply $\alpha_2 = n-2m+2p+2$ and so $\chi_2 = n-2m+2p+1$; since $\ell$
  is admissible for $\Delta(P_1)$, then one would conclude a contradiction. Hence
  the rational function $\Psi(\ell)$ has no poles, its maximum is atteined in
  $\chi_2$ and its value is $\Psi(\chi_2) = n-1$. Hence 
  $\bY_a(\ell) \leq (n-1)\bX(\ell)$.
  Now, we analyse any possible case:
  \begin{enumerate}
  \item[(A1)] $\chi_1 = 0, \alpha_1=0$. This implies $\chi_2=n-2m+2p+1, \alpha_2=n-2m+2p+2$.
    We deduce that 
    \begin{align*}
    \delta_a &= \sum_{\ell = 0}^{\chi_2}\bY_a(\ell) + \bY_a(\alpha_2) \leq (n-1) \sum_{\ell = 0}^{\chi_2}\bX(\ell) + \\
             &+ \bY_a(\alpha_2)  \leq (n-1)\Delta(P_1)+ \bY_a(\alpha_2).
    \end{align*}
    In this case we deduce the bound $\delta_a \leq n  \Delta(P_1)$.
  \item[(A2)] $\chi_1 = 0, \alpha_1=n-2m+p+2$. This implies $\chi_2=n-2m+2p+1, \alpha_2=p$.
    In this case all indices are admissible, and hence we deduce the bound $\delta_a \leq (n-1) \Delta(P_1).$
  \item[(A3)] $\chi_1 = n-2m+p+1$. This implies $\alpha_1=n-2m+p+2$, $\chi_2=p, \alpha_2=p$.
    Also in this case all indices are admissible, and $\delta_a \leq (n-1) \Delta(P_1).$
  \end{enumerate}

{\it Case (B)}. $i + \ell=n$, $2m-p-2-i+j =p-1$ and $n-1-j+p-\ell = 2m-p-2$. Here the contribution is
  $\delta_b=\sum_\ell \bY_b(\ell)$ where
  \[
  \bY_b(\ell) = \binom{2m-p-2}{n-\ell}\binom{n-1}{2m-2p-2+\ell}\binom{p}{\ell}.
  \]
  One gets $\delta_b \leq \Delta(P_1)$ since the sum above is defined in $\max \{0,n\-2m+p+2\} \leq
  \ell \leq \min \{p,n-2m+2p+1\}$, and the inequality $\bY_b(\ell) \leq \bX(\ell)$ holds term-wise.

{\it Case (C)} $i + \ell=n-1$, $2m-p-2-i+j = p$ and $n-1-j+p-\ell = 2m-p-2$. Here the contribution is
  $\delta_c = \sum_{\ell}\bY_c(\ell)$ where
  \[
  \bY_c(\ell) = \binom{2m-p-2}{n-1-\ell}\binom{n-1}{2m-2p-2+\ell}\binom{p}{\ell}.
  \]
  One gets $\delta_c \leq \Delta(P_1)$ since the sum above is defined in $\max\{0,n\-2m+p+1\} \leq
  \ell \leq \min\{p,n-2m+2p+1\}$, and the inequality $\bY_c(\ell) \leq \bX(\ell)$ holds term-wise.
We conclude that $\delta_a \leq n  \Delta(P_1)$, $\delta_b \leq \Delta(P_1)$ and $\delta_c \leq \Delta(P_1)$.
Hence $\Delta(P_2) = (2m-p-1) (\delta_a+\delta_b+\delta_c) \in {O}(n(2m-p)  \Delta(P_1)).$

Analogously to $\Delta(P_2)$, one can conclude that $\Delta(P_3) \in {O}(n(n+2m-p) \Delta(P_1))$
and $\Delta(P_4) \in {O}(pn(n+2m-p) \Delta(P_1)).$
\end{proof}

\noindent{\bf Estimates.}\\
We provide the whole complexity of ${\sf ZeroDimSolveMaxRank}$.
\begin{theorem}
Let $\delta = \delta(m,n,p)$ be given by Lemma \ref{lemma1}. Then
{\sf ZeroDimSolveMaxRank} with input $\bell(\H^M, \u_p, \v_p)$ computes
a rational parametrization 
within
\[
\softO(p(n+2m)^{O(1)}(2m-p) \delta^2),
\]
arithmetic operations over $\QQ$.
\end{theorem}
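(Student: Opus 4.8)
The plan is to combine the degree bounds from Lemmas \ref{lemma1} and \ref{lemma2} with the complexity statement quoted from \cite{jeronimo2009deformation}. Recall that the symbolic-homotopy solver of \cite{jeronimo2009deformation} runs in time $\softO(Ln^{O(1)}\delta\delta')$, where $L$ is the evaluation complexity of the input system, $\delta$ bounds the number of isolated solutions of the target system $\bell$, and $\delta'$ bounds the degree of the incidence curve $\zeroset{\mathbf{t}}$ cut out by the homotopy. The three inputs $\delta$, $\delta'$, $L$ must be estimated in turn, and then substituted.

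First I would set $\delta=\delta(m,n,p)$: by Lemma \ref{lemma1}, this is a valid bound on the number of isolated solutions of $\zeroset{\bell}$, hence a fortiori on the finite set $\setZ$ that {\sf ZeroDimSolveMaxRank} must output (the output is a rational parametrization of a subset of that solution set, namely the points where $\jac\bell$ has maximal rank). Next I would set $\delta'=\deg\zeroset{\mathbf{t}}$, which by Lemma \ref{lemma2} is $O(pn(2m-p)\,\delta(m,n,p))$. Then I would estimate $L$, the cost of evaluating the Lagrange system $\bell=(\tilde{\f},\tilde{\g},\tilde{\h})$: the map $\tilde{\f}$ is built from the rectangular Hankel matrix-vector product $\tilde{\H}(M\vecx)\vecy$, whose entries are bilinear in $(\vecx,\vecy)$ with $O(nm)$ monomials each, and $\tilde{\g},\tilde{\h}$ are obtained from $\vecz'\jac_1\tilde{\f}$, i.e. one matrix-vector product with the Jacobian; all of this costs $(n+2m)^{O(1)}$ arithmetic operations, so $L\in(n+2m)^{O(1)}$.

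Substituting into $\softO(Ln^{O(1)}\delta\delta')$ gives
\[
\softO\!\left((n+2m)^{O(1)}\cdot\delta\cdot pn(2m-p)\delta\right)=\softO\!\left(p(n+2m)^{O(1)}(2m-p)\,\delta^2\right),
\]
absorbing the factor $n$ coming from $\delta'$ and the $n^{O(1)}$ from the solver into the $(n+2m)^{O(1)}$ term, which is exactly the claimed bound. One should also remark that the hypothesis of {\sf ZeroDimSolveMaxRank} — that $\{\vecx : \rank(\jac\bell(\vecx))=c\}$ is finite — is precisely what Proposition \ref{prop:dimension} guarantees under Assumption $\sfH$, so applying \cite{jeronimo2009deformation} is legitimate here; and if that hypothesis fails the routine returns {\sf fail} at no extra asymptotic cost.

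The main obstacle, which is really bookkeeping rather than conceptual, is making sure the polynomial factors are tracked honestly: the solver contributes $n^{O(1)}$, the evaluation contributes $(n+2m)^{O(1)}$, and $\delta'/\delta$ contributes a genuine $O(pn(2m-p))$ factor from Lemma \ref{lemma2} — so one must check that collecting these into a single $p(n+2m)^{O(1)}(2m-p)$ prefactor does not hide a spurious extra power of $p$ or $(2m-p)$. Since Lemma \ref{lemma2} already isolates the $p$ and $(2m-p)$ dependence explicitly (the dominant term being $\Delta(P_4)\in O(pn(n+2m-p)\Delta(P_1))$), this check is immediate, and the theorem follows.
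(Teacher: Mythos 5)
Your proposal is correct and follows essentially the same route as the paper: the paper likewise bounds the evaluation cost of the (cubic) homotopy system by a polynomial in $n+2m$ and then plugs the bounds of Lemmas \ref{lemma1} and \ref{lemma2} into the complexity statement of \cite[Theorem 5.2]{jeronimo2009deformation}. The only cosmetic difference is that the paper measures the evaluation cost of the homotopy system $\mathbf{t}$ directly (as $O((n+2m)^3)$), while you estimate the cost of evaluating $\bell$; both give the same $(n+2m)^{O(1)}$ factor and hence the same final bound.
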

\begin{proof}
  The polynomial entries of the system $\mathbf{t}$ (as defined in the
  previous section) are cubic polynomials in $n+2m-1$ variables, so
  the cost of their evaluation is in $O((n+2m)^3)$. Applying
  \cite[Theorem 5.2]{jeronimo2009deformation} and bounds given in
  Lemma \ref{lemma1} and \ref{lemma2} yield the claimed complexity
  estimate.
\end{proof}

From Lemma \ref{lemma1}, one deduces that for all $0 \leq p \leq r$,
the maximum number of complex solutions computed by ${\sf
  ZeroDimSolveMaxRank}$ is bounded above by $\delta(m,n,p)$. We deduce
the following result.

\begin{proposition}
  Let $H$ be a $m \times m$, $n-variate$ linear Hankel matrix, and let
  $r \leq m-1$.  The maximum number of complex solutions computed by
  ${\sf LowRankHankel}$ with input $(H,r)$ is
$$
\binom{2m-r-1}{r} + \sum_{k=2m-2r}^{n}\sum_{p=0}^{r} \delta(m,k,p).
$$
where 
$\delta(m,k,p)$ is the bound defined in Lemma \ref{lemma1}.
\end{proposition}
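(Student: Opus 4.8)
The plan is to prove the bound by descending induction on the number of variables $n$, tracking the total number $N(n)$ of complex solutions encoded by all rational parametrizations produced during a run of ${\sf LowRankHankel}$ on an $n$-variate linear Hankel matrix, with $m$ and $r$ fixed. The starting observation is that none of the bookkeeping subroutines increases this count: ${\sf Lift}$ and ${\sf ChangeVariables}$ act bijectively on the underlying finite sets, and ${\sf Union}$ of two parametrizations encoding $a$ and $b$ points encodes at most $a+b$ points. Since each invocation of ${\sf LowRankHankel}$ on an $n$-variate input makes exactly one recursive call, on the $(n-1)$-variate matrix ${\sf Subs}(\X_1=\fiber,\H^M)$ (Step \ref{step:rec:5}), and otherwise produces only one parametrization $\sfP$ of ``local'' solutions before the final ${\sf Union}$/${\sf ChangeVariables}$, one obtains a recurrence of the form $N(n)\le L(n)+N(n-1)$, where $L(n)$ is the number of points encoded by $\sfP$.

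First I would settle the base cases. If $n<2m-2r-1$ (Step \ref{step:rec:1}) the output is the empty parametrization, so $N(n)=0$. If $n=2m-2r-1$ (Step \ref{step:rec:2}) the algorithm returns ${\sf ZeroDimSolve}(\f(\H,\u_r,r),\vecx)$; by Proposition \ref{prop:regularity}(a) and the Jacobian criterion $\incidence(\H,\u_r,r)$ is then finite, and its degree is bounded by the multilinear B\'ezout bound of the defining system. Eliminating one $\vecy$-coordinate via the linear equation $\u_r'\vecy-1$, that system consists of $2m-r-1$ bilinear forms in the groups $\vecx$ (of size $2m-2r-1$) and the remaining $\vecy$ (of size $r$); the corresponding bound — equivalently $\delta(m,2m-2r-1,r)$ computed as in Lemma \ref{lemma1} — equals $\binom{2m-r-1}{r}$. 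Projecting onto the $\vecx$-space does not increase the count, hence $N(2m-2r-1)\le\binom{2m-r-1}{r}$.

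Then I would bound $L(n)$ for $n>2m-2r-1$, distinguishing two cases according to whether the call ${\sf ZeroDimSolve}(\lagrange(\f(\H,\u_r,r),\v))$ returns a non-empty list. If it does, $\sfP$ encodes a finite set whose cardinality is bounded by the multilinear B\'ezout bound of that Lagrange system, which is (a case of) the bound $\delta(m,n,r)$ of Lemma \ref{lemma1}, hence at most $\sum_{p=0}^{r}\delta(m,n,p)$. Otherwise the loop over $p$ from $0$ to $r$ runs, and $\sfP$ is the ${\sf Union}$ of the outputs of ${\sf ZeroDimSolveMaxRank}(\lagrange(\H^M,\u_p,\v_p),\vecx)$; each such output encodes (the projection of) $\reg(\lagrange(\H^M,\u_p,\v_p,p))$, which is finite with at most $\delta(m,n,p)$ isolated points by Lemma \ref{lemma1}, so again $L(n)\le\sum_{p=0}^{r}\delta(m,n,p)$. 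In both cases $L(n)\le\sum_{p=0}^{r}\delta(m,n,p)$, and unrolling $N(n)\le L(n)+N(n-1)$ down to $n=2m-2r-1$ yields
\[
N(n)\ \le\ \binom{2m-r-1}{r}+\sum_{k=2m-2r}^{n}\sum_{p=0}^{r}\delta(m,k,p),
\]
which is the claim.

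The step I expect to be the main obstacle is the first of the two cases in the inductive argument: one must check that the Lagrange system treated by the plain ${\sf ZeroDimSolve}$ call (built from the full incidence variety with $p=r$, rather than the rank-stratified systems of Step \ref{step:rec:3}) really shares the multihomogeneous structure analyzed in Lemma \ref{lemma1}, so that $\delta(m,n,r)$ is a legitimate upper bound — and, since that incidence variety also contains loci of rank strictly less than $r$ where it may be singular, that this B\'ezout count still dominates its degree; decomposing by rank strata as in the correctness proof and using $\delta(m,n,r)\le\sum_{p=0}^{r}\delta(m,n,p)$ suffices for the inequality. The remaining verifications — that the base-case B\'ezout count is exactly $\binom{2m-r-1}{r}$, and that ${\sf Lift}$, ${\sf Union}$, ${\sf ChangeVariables}$ do not inflate the point count — are routine.
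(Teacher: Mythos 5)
Your proposal is correct and follows essentially the same route as the paper: the base case $n=2m-2r-1$ is handled by the multilinear B\'ezout bound $\binom{2m-r-1}{r}$ on the incidence variety, and each recursive level with $k$ variables, $k$ ranging from $2m-2r$ to $n$, contributes at most $\sum_{p=0}^{r}\delta(m,k,p)$ points via Lemma \ref{lemma1}, since {\sf Union}, {\sf Lift} and {\sf ChangeVariables} do not increase the count. Your extra care about the plain {\sf ZeroDimSolve} call on the Lagrange system (bounded by $\delta(m,n,r)\le\sum_{p}\delta(m,n,p)$) only makes explicit a case the paper's terse argument subsumes in its worst-case count.
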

\begin{proof}
  The maximum number of complex solutions computed by {\sf
    ZeroDimSolve} is the degree of $\setV(H, \u, r)$. Using, the
  multilinear B\'ezout bounds, this is bounded by the coefficient of
  the monomial $s_\X^{n}s_\Y^{r}$ in the expression
  $(s_\X+s_\Y)^{2m-r-1}$, that is exactly $\binom{2m-r-1}{r}$.  The
  proof is now straightforward, since {\sf ZeroDimSolveMaxRank} runs
  $r+1$ times at each recursive step of {\sf LowRankHankel}, and since
  the number of variables decreases from $n$ to $2m-2r$.
\end{proof}

\section{Proof of Proposition \ref{prop:dimension}} \label{sec:dimension}

\noindent
We start with a local description of the algebraic sets defined by our
Lagrange systems. This is obtained from a local description of the
system defining $\incidence(\H, \u, p)$. Without loss of generality,
we can assume that $\u=(0, \ldots, 0, 1)$ in the whole section: such a
situation can be retrieved from a linear change of the $\vecy$-variables
that leaves invariant the $\vecx$-variables.


\subsection{Local  equations} \label{ssec:dimlag:local}\label{sssec:dimlag:local:inc}
\label{sssec:dimlag:local:lag}



Let $(\vecx, \vecy)\in \incidencereg(\H, \u, p)$. Then, by definition, there
exists a $p \times p$ minor of $\tilde{\H}(\vecx)$ that is
non-zero. Without loss of generality, we assume that this minor is the
determinant of the upper left $p\times p$ submatrix of
$\tilde{H}$. Hence, consider the following block partition
\begin{equation} \label{partition}
\tilde{\H}(\vecx) = 
\left[
\begin{array}{cc}
  N  & Q \\
  P  & R \\
\end{array}
\right]
\end{equation}
with $N \in \QQ[\vecx]^{p \times p}$, and $Q \in \QQ[\vecx]^{p}$, $P \in
\QQ[\vecx]^{(2m-2p-1) \times p}$, and $R \,\in \,\QQ[\vecx]^{2m-2p-1}$. We
are going to exhibit suitable local descriptions of $\incidencereg(\H,
\u, p)$ over the Zariski open set $O_N\subset \CC^{n+p+1}$ defined by
$\det N \neq 0$; we denote by $\QQ[\vecx,\vecy]_{\det N}$ the local ring of
$\QQ[\vecx, \vecy]$ localized by $\det N$.


\begin{lemma} \label{lemma:local:incidence}
Let $N,Q,P,R$ be as above, and $\u \in \QQ^{p+1}-\{\mathbf{0}\}$. Then there exist
$\{q_{i}\}_{1 \leq i \leq p} \subset \QQ[\vecx]_{\det N}$ and
$\{\tilde{q}_{i}\}_{1 \leq i \leq 2m-2p-1} \subset \QQ[\vecx]_{\det N}$ such that the
constructible set $\incidencereg(\H, \u, p) \cap O_N$ is
defined by the equations
\begin{align*}
\Y_{i} - q_{i}(\vecx) &= 0 \qquad 1 \leq i \leq p \\
\tilde{q}_{i}(\vecx) &= 0 \qquad 1 \leq i \leq 2m-2p-1 \\
\Y_{p+1} - 1 &= 0.
\end{align*}
\end{lemma}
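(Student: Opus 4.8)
The idea is to solve, in the local ring $\QQ[\vecx]_{\det N}$, the defining equations of $\incidence(\H,\u,p)$ restricted to the open set $O_N$, using the invertibility of the $p\times p$ block $N$ together with the normalization $\u=(0,\ldots,0,1)$, i.e. $\Y_{p+1}=1$. Recall that $\setV(\H,\u,p)=\zeroset{\f(\H,\u,p)}$ is cut out by $(\tilde{\H}(\vecx)\,\vecy)'=0$ and $\u'\vecy-1=0$. With $\u=(0,\ldots,0,1)$ the last equation is exactly $\Y_{p+1}-1=0$, which already gives the third family. Write $\vecy=(\vecy',1)$ with $\vecy'=(\Y_1,\ldots,\Y_p)$; then using the block partition \eqref{partition} the vector equation $\tilde{\H}(\vecx)\vecy=0$ splits as
\[
N\vecy' + Q = 0, \qquad P\vecy' + R = 0.
\]

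First I would invert $N$ over $\QQ[\vecx]_{\det N}$: from the first block, $\vecy' = -N^{-1}Q$, so setting $q_i(\vecx)$ to be the $i$-th coordinate of $-N^{-1}Q$ (which lies in $\QQ[\vecx]_{\det N}$, since $N^{-1}=\frac{1}{\det N}\,\mathrm{adj}(N)$) yields the equations $\Y_i - q_i(\vecx)=0$ for $1\le i\le p$. Substituting into the second block gives $-PN^{-1}Q + R = 0$; setting $\tilde q_i(\vecx)$ to be the $i$-th coordinate of $R - PN^{-1}Q \in \QQ[\vecx]_{\det N}^{2m-2p-1}$ yields the equations $\tilde q_i(\vecx)=0$ for $1\le i\le 2m-2p-1$. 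Conversely, any $(\vecx,\vecy)\in O_N$ satisfying these three families clearly satisfies $\tilde{\H}(\vecx)\vecy=0$ and $\u'\vecy-1=0$, so it lies in $\setV(\H,\u,p)\cap O_N$; and since $\det N\neq 0$ forces the rank of $\tilde{\H}(\vecx)$ to be exactly $p$ on this set (the upper-left $p\times p$ minor is nonzero, and the Schur-complement equations $\tilde q_i=0$ kill all larger minors — equivalently $\rank\tilde{\H}(\vecx)=p$ because $R-PN^{-1}Q=0$ means the last rows are linear combinations of the first $p$), the point in fact lies in $\incidencereg(\H,\u,p)\cap O_N$. This gives the claimed equality of constructible sets.

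The one point requiring a little care — and the main (mild) obstacle — is the bookkeeping that $\det N\neq 0$ is genuinely equivalent to $\rank\tilde{\H}(\vecx)=p$ on the solution locus, rather than merely $\geq p$: this is where one uses that once $R-PN^{-1}Q=0$ holds, every row of $\tilde{\H}(\vecx)$ outside the first $p$ is a $\QQ[\vecx]_{\det N}$-combination of the first $p$ rows, so all $(p+1)\times(p+1)$ minors vanish identically on the locus, pinning the rank to exactly $p$. Everything else is the routine Schur-complement / Cramer's-rule computation in the localized polynomial ring, and the membership of $q_i,\tilde q_i$ in $\QQ[\vecx]_{\det N}$ is immediate from the adjugate formula. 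I would also remark explicitly at the start that the reduction to $\u=(0,\ldots,0,1)$ is harmless, as already noted in the text, since a linear change of the $\vecy$-variables fixing the $\vecx$-variables transforms $\u$ to this normal form without affecting $\incidencereg$ up to that coordinate change.
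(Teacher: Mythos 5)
Your argument is correct and is essentially the paper's own proof: both reduce to $\u=(0,\ldots,0,1)$ so that $\u'\vecy-1=0$ reads $\Y_{p+1}=1$, and then perform the same Schur-complement/block-elimination computation over $\QQ[\vecx]_{\det N}$, taking $q_i$ and $\tilde q_i$ to be the entries of $-N^{-1}Q$ and $-(R-PN^{-1}Q)$. The only difference is cosmetic: you substitute $\Y_{p+1}=1$ and solve directly (and spell out that $R-PN^{-1}Q=0$ with $\det N\neq 0$ forces $\rank\,\tilde{\H}(\vecx)=p$, a point the paper leaves implicit), whereas the paper phrases the same computation as left multiplication by invertible block matrices.
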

\begin{proof}
Let $c=2m-2p-1$.
The proof follows by the equivalence
\[
\left[ \begin{array}{cc} N & Q \\ P & R \end{array} \right] \vecy = 0
\;
\text{iff}
\;
\left[ \begin{array}{cc} \Id_p & 0 \\ -P & \Id_{c} \end{array} \right]
\left[ \begin{array}{cc} N^{-1} & 0 \\ 0 & \Id_{c} \end{array} \right]
\left[ \begin{array}{cc} N & Q \\ P & R \end{array} \right] \vecy = 0
\]
in the local ring $\QQ[\vecx,\vecy]_{\det N}$, that is if and only if
\[
\left[ \begin{array}{cc} \Id_p & N^{-1}Q \\ 0 & R-PN^{-1}Q \end{array} \right] \vecy = 0
\]
Recall that we have assumed that $\u=(0, \ldots, 0, 1)$; then the
equation $\u\vecy=1$ is $\Y_{p+1}=1$. Denoting by $q_{i}$ and
$\tilde{q}_{i}$ respectively the entries of vectors $-N^{-1}Q$ and
$-(R-PN^{-1}Q)$ ends the proof.
\end{proof}

The above local system is denoted by $\tilde{\f} \in \QQ[\vecx,\vecy]_{\det N}^{2m-p}$.
The Jacobian matrix of this polynomial system is
\[
\jac\tilde{\f} =
\left[
\begin{array}{cc}
\begin{array}{c} \jac_x\tilde{\q} \\ \star \end{array}
&
\begin{array}{c} 0 \\ \Id_{p+1} \end{array}
\end{array}
\right]
\]
with $\tilde{\q} = (\tilde{q}_{1}(\vecx), \ldots,
\tilde{q}_{2m-2p-1}(\vecx))$. Its kernel defines the tangent space to
$\incidencereg(\H, \u, p)\cap O_N$.  Let $\w=(\w_1, \ldots, \w_n) \in
\CC^n$ be a row vector; we denote by $\pi_\w$ the projection
$\pi_\w(\vecx,\vecy) = \w_1\X_1 + \cdots + \w_n\X_n$.  
Given a row vector $\v \in \CC^{2m-p+1}$, we denote by ${\sf wlagrange}(\tildef,
\v)$ the following polynomial system
\begin{equation} \label{local-lag}
\tildef, \,\,\,\, (\tilde{\g}, \tilde{\h}) = [\Z_1, \ldots, \Z_{2m-p}, \Z_{2m-p+1}]
\left[
\begin{array}{c}
\jac \tildef \\
\begin{array}{cc}
\w & 0
\end{array}
\end{array}
\right], \,\,\,\,
\v'\vecz-1.
\end{equation}
For all $0 \leq p \leq r$, this polynomial system contains $n+2m+2$
polynomials and $n+2m+2$ variables. We denote by ${\sf L}_p(\tildef, \v,
\w)$ the set of its solutions whose projection on the $(\vecx, \vecy)$-space
lies in $O_N$.

Finally, we denote by ${\sf wlagrange}({\f}, \v)$ the polynomial
system obtained when replacing $\tildef$ above with $\f=\f(\H, \u,
p)$. Similarly, its solution set is denoted by ${\sf L}_p({\f}, \v,
\w)$.



\subsection{Intermediate result} \label{ssec:dimlag:intlemma}



\begin{lemma} \label{lemma:intermediate}
Let $\zarH \subset \CC^{(2m-r)(n+1)}$ be the non-empty Zariski open set
defined by Proposition \ref{prop:regularity}, $\H \in \zarH$ 
and $0 \leq p \leq r$.
There exist non-empty Zariski open sets $\zarV \subset \CC^{2m-p}$ and
$\zarW \subset \CC^n$ such that if $\v \in \zarV$ and $\w \in
\zarW$, the following holds:
\begin{itemize}
\item[(a)] the set $\L_p(\f, \v, \w)=\L(\f, \v, \w) \cap
  \{(\vecx,\vecy,\vecz) \mid {\rank}\,\tilde{H}(\vecx)=p\}$ is finite and the
  Jacobian matrix of ${\sf wlagrange}({\f}, \v)$ has maximal rank at
  any point of $\L_p(\f, \v, \w)$;
\item[(b)] the projection of $\L_p(\f, \v, \w)$ in the
  $(\vecx,\vecy)$-space contains the critical points of the restriction of
  $\pi_\w$ restricted to $\incidencereg(\H, \u, p)$.
\end{itemize}
\end{lemma}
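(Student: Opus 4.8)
The plan is to prove Lemma \ref{lemma:intermediate} by a transversality argument applied to the local Lagrange system \eqref{local-lag}, parametrized by the auxiliary vector $\w \in \CC^n$, and then to transfer the conclusions from the localized system ${\sf wlagrange}(\tildef,\v)$ to ${\sf wlagrange}(\f,\v)$ and from the set of critical points to its Lagrange encoding. Throughout I work over the open set $O_N$ where $\det N \neq 0$, use the local description of $\incidencereg(\H,\u,p)$ from Lemma \ref{lemma:local:incidence}, and recall from Proposition \ref{prop:regularity} that $\f(\H,\u,p)$ satisfies $\sfG$, so $\incidence(\H,\u,p)$ is smooth equidimensional of dimension $n-2m+2p+1$.

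\emph{Step 1: critical points of $\pi_\w$ on $\incidencereg$.} Using the block form of $\jac\tildef$ displayed above, with the $(p+1)$ columns corresponding to the $\vecy$-variables forming an identity block, I first observe that $(\vecx,\vecy)\in\incidencereg(\H,\u,p)\cap O_N$ is a critical point of the restriction of $\pi_\w$ exactly when the $\vecx$-gradient of $\pi_\w$, i.e. the row vector $\w$, lies in the row span of $\jac_\vecx\tilde{\q}$ together with the constraint gradients; equivalently, there is a Lagrange multiplier vector $\vecz$ with $\vecz'\jac\tildef = (\w,0)$ in the $\vecx$-block and $0$ in the $\vecy$-block. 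This is precisely the condition encoded by ${\sf wlagrange}(\tildef,\v)$, so part (b) for the localized system is a formal consequence, and normalizing by $\v'\vecz - 1$ just fixes the scaling of $\vecz$ (generic $\v$ avoids the measure-zero set where the relevant multiplier vector is $\v$-orthogonal).

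\emph{Step 2: finiteness and smoothness via transversality.} This is the main obstacle. I would set up the global incidence map
\[
\Theta\colon (\text{base of }\w) \times \{(\vecx,\vecy,\vecz)\} \longrightarrow \CC^{n+2m+2}, \qquad (\w,\vecx,\vecy,\vecz)\longmapsto {\sf wlagrange}(\tildef,\v)
\]
restricted to the locus where $\det N \neq 0$ and $\rank\,\tilde\H(\vecx)=p$, and show $\mathbf{0}$ is a regular value of $\Theta$. The delicate point is that on the rank-$p$ locus the Lagrange multipliers are \emph{not} free: one must check that moving $\w$ already surjects onto the obstruction, which works because the $\vecy$-block of $\jac\tildef$ is the identity (so the $\tilde\g$-part of the system can be solved for some of the $\vecz$'s) and because the $\w$-variables enter the $\tilde\g$-equations linearly and independently; combined with the $\sfG$ property of $\f$ this gives submersivity of $\Theta$. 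Thom's weak transversality theorem then yields a non-empty Zariski open $\zarW\subset\CC^n$ such that for $\w\in\zarW$ the section $\Theta_\w^{-1}(\mathbf 0)$ — which is exactly $\L(\tildef,\v,\w)\cap\{\rank\,\tilde\H(\vecx)=p\}$ — is smooth of dimension $0$, i.e. finite, with $\jac({\sf wlagrange}(\tildef,\v))$ of maximal rank at each of its points. A separate (easy) transversality in the $\v$-variable produces $\zarV\subset\CC^{2m-p}$ ensuring the normalization $\v'\vecz-1$ is generic. This proves (a) for the localized system.

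\emph{Step 3: descent to the non-localized system.} Finally I transfer everything from ${\sf wlagrange}(\tildef,\v)$ to ${\sf wlagrange}(\f,\v)$. Over $O_N$, Lemma \ref{lemma:local:incidence} says $\zeroset\f$ and $\zeroset\tildef$ coincide, and the two Jacobians differ by multiplication by an invertible matrix with entries in the local ring (the block elimination in the proof of Lemma \ref{lemma:local:incidence}); hence the Lagrange conditions $\vecz'\jac\f=(\w,0)$ and $\vecz'\jac\tildef=(\w,0)$ define the same locus up to an invertible change of the $\vecz$-coordinates, so $\L_p(\f,\v,\w)$ is finite with maximal-rank Jacobian iff $\L_p(\tildef,\v,\w)$ is, and the $(\vecx,\vecy)$-projections agree; this gives (a) and (b) for $\f$. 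Since the choice of which $p\times p$ minor of $\tilde\H$ is invertible only affects which chart $O_N$ one uses, and finitely many such charts cover $\incidencereg(\H,\u,p)$, intersecting the finitely many resulting open sets $\zarV$, $\zarW$ yields the non-empty Zariski open sets claimed in the statement.
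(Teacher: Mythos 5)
Your strategy for assertion (a) --- Thom's weak transversality applied to the localized Lagrange system with $\w$ (and $\v$) as parameters, using the rank condition coming from Property $\sfG$ for the $(\vecx,\vecy)$-columns and the linear occurrence of $\w$ for the remaining ones, then gluing over the finitely many charts $O_N$ --- is essentially the paper's argument, and your Step 3 even makes explicit the transfer from $\tildef$ to $\f$ that the paper treats briefly. Two remarks on that part. First, with $\v$ fixed inside the system, the normalizing equation $\v'\vecz-1$ need not be transverse to the one-dimensional curve cut out by the homogeneous Lagrange equations, so the ``separate (easy) transversality in the $\v$-variable'' cannot really be postponed: as in the paper, one should carry $\v$ (more precisely the coordinates $\tilde{\v}$ that survive after the multipliers forced to vanish by $\tilde{\h}$ are eliminated) as additional transversality parameters alongside $\w$. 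Second, on $O_N$ every point of $\zeroset{\tildef}$ automatically satisfies $\rank\,\tilde{\H}(\vecx)=p$, so the rank-$p$ condition creates no extra difficulty in the chart; your worry about the multipliers ``not being free'' there is not where the work lies.

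The genuine gap is in assertion (b), which you present in Step 1 as a formal consequence, with the remark that a generic $\v$ ``avoids the measure-zero set where the relevant multiplier vector is $\v$-orthogonal.'' At each critical point $(\vecx,\vecy)$ of $\pi_\w$ on $O_N\cap\incidencereg(\H,\u,p)$ the admissible multipliers form a line through the origin in the $\vecz$-space, and the bad $\v$ for that point form a hyperplane depending on the point; to obtain a single non-empty Zariski open set $\zarV$ valid for all critical points one must know that $\crit(\pi_\w, O_N\cap\incidencereg(\H,\u,p))$ is finite, otherwise the union of these hyperplanes need not lie in a proper closed subset and some critical points would admit no multiplier with $\v'\vecz=1$, hence would be missed by $\L_p(\f,\v,\w)$. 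The finiteness you establish in Step 2 does not supply this: it concerns the normalized Lagrange solutions, i.e.\ only those critical points that already meet the hyperplane $\v'\vecz=1$. The paper closes exactly this point: for $\w\in\zarW_N$ the homogeneous Lagrange set $q_\w^{-1}(\mathbf{0})$ is empty or equidimensional of dimension one, the fiber of $\pi_{\vecx,\vecy}$ over each critical point is at least one-dimensional by homogeneity in $\vecz$, and the theorem on the dimension of fibers then forces the critical locus to be finite; $\zarV$ is then a finite intersection of non-empty open conditions, one per critical point (and per chart). Some argument of this kind must be added for (b), and note that it also makes the choice of $\zarV$ depend on the prior choice of $\w\in\zarW$, an ordering your write-up inverts.
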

\proof
We start with Assertion (a).

The statement to prove holds over $\incidencereg(\H, \u, p)$; hence it
is enough to prove it on any open set at which one $p\times p$ minor
of $\tilde \H$ is non-zero. Hence, we assume that the determinant of
the upper left $p\times p$ submatrix $N$ of $\tilde \H$ is non-zero;
$O_N\subset \CC^{n+p+1}$ is the open set defined by $\det\, N \neq 0$,
and we reuse the notation introduced in this section. We prove
that there exist non-empty Zariski open sets $\zarV'_N\subset \CC^{2m-p}$
and $\zarW_N \subset \CC^{n}$ such that for $\v \in \zarV'_N$ and $\w \in
\zarW_N$, $\L_p(\tildef, \v, \w)$ is finite and that the Jacobian matrix
associated to ${\sf wlagrange}(\tildef, \v)$ has maximal rank at any
point of $\L_p(\tildef, \v, \w)$. The Lemma follows straightforwardly
by defining $\zarV'$ (resp. $\zarW$) as the intersection of $\zarV'_N$
(resp. $\zarW_N$) where $N$ varies in the set of $p \times p$ minors
of $\tilde{H}(\vecx)$.

Equations $\tilde{\h}$ yield $\Z_{j}=0$ for $j=2m-2p, \ldots, 2m-p$,
and can be eliminated together with their
$\vecz$ variables from the Lagrange system ${\sf wlagrange}(\tildef,
\v)$. It remains $\vecz$-variables $\Z_1, \ldots, \Z_{2m-2p-1}, \Z_{2m-p+1}$;
we denote by $\Omega \subset \CC^{2m-2p}$ the Zariski open set where they don't
vanish simultaneously.

Now, consider the map
\[
  \begin{array}{lrcc}
  q : &  O_N \times \Omega \times \CC^{n} & \longrightarrow & \CC^{n+2m-p} \\
            &  (\vecx,\vecy,\vecz,\w) & \longmapsto & (\tildef, \tilde{\g})
  \end{array}
\]
and, for $\w \in \CC^n$, its section map $q_{\w}(\vecx,\vecy,\vecz) = q(\vecx,\vecy,\vecz,\w)$.
We consider $\tilde{\v} \in \CC^{2m-p}$ and we denote by $\tilde{\vecz}$
the remaining $\vecz-$variables, as above. Hence we define
\[
  \begin{array}{lrcc}
  Q : &  O_N \times \Omega \times \CC^{n} \times \CC^{2m-2p} & \longrightarrow & \CC^{n+2m-p+1} \\
            &  (\vecx, \vecy, \vecz, \w, \tilde{\v}) & \longmapsto & (\tildef, \tilde{\g}, \tilde{\v}'\vecz-1)
  \end{array}
\]
and its section map $Q_{\w,\tilde{\v}}(\vecx,\vecy,\vecz) = q(\vecx,\vecy,\vecz,\w,\tilde{\v})$.
We claim that $\mathbf{0} \in \CC^{n+2m-p}$ (resp. $\mathbf{0} \in \CC^{n+2m-p+1}$) is a
regular value for $q$ (resp. $Q$). Hence we deduce, by Thom's Weak Transversality
Theorem, that there exist non-empty Zariski open sets $\zarW_N \subset \CC^n$ and
$\tilde{\zarV}_N \subset \CC^{2m-2p}$ such that if $\w \in \zarW_N$ and $\tilde{\v} \in
\tilde{\zarV}_N$, then $\mathbf{0}$ is a regular value for $q_{\w}$ and $Q_{\w,\tilde{\v}}$.

We prove now this claim. Recall that since $H \in \zarH$, the Jacobian matrix
$\jac_{\vecx,\vecy} \tilde{\f}$ has maximal rank at any point $(\vecx,\vecy) \in \zeroset{\tilde{\f}}$.
Let $(\vecx,\vecy,\vecz,\w) \in q^{-1}(\bf0)$ (resp. $(\vecx, \vecy, \vecz, \w, \tilde{\v}) \in Q^{-1}(\bf0)$).
Hence $(\vecx,\vecy) \in \zeroset{\tilde{\f}}$. We isolate the square submatrix of
$\jac q (\vecx,\vecy,\vecz,\w)$ obtained by selecting all its rows and  
\begin{itemize}
\item the columns corresponding to derivatives of $\vecx, \vecy$ yielding a
  non-singular submatrix of $\jac_{\vecx,\vecy} \tilde{\f}(\vecx,\vecy)$;
\item the columns corresponding to the derivatives w.r.t. $\w_1,
  \ldots, \w_n$, hence this yields a block of zeros when applied to
  the lines corresponding to $\tilde{\f}$ and the block $\Id_n$ when
  applied to $\tilde{\g}$.
\end{itemize}
For the map $Q$, we consider the same blocks as above. Moreover, since
$(\vecx,\vecy,\vecz,\w,\tilde{\v}) \in Q^{-1}(\bf0)$ verifies $\tilde{\v}'\vecz-1=0$,
there exists $\ell$ such that $\z_\ell \neq 0$. Hence, we add the derivative
of the polynomial $\tilde{\v}'\vecz-1$ w.r.t. $\tilde{\v}_\ell$, which
is $\z_\ell \neq 0$. The claim is proved.

Note that $q^{-1}_\w(\mathbf{0})$ is defined by $n+2m-p$ polynomials
involving $n+2m-p+1$ variables. We deduce that for $\w \in \zarW_N$,
$q_\w^{-1}(\mathbf{0})$
is either empty or it is equidimensional and has dimension $1$. Using
the homogeneity in the $\vecz$-variables and the Theorem on the Dimension of
Fibers \cite[Sect. 6.3, Theorem 7]{Shafarevich77}, we deduce that the projection on the $(\vecx, \vecy)$-space of
$q_\w^{-1}(\mathbf{0})$ has dimension $\leq 0$.
We also deduce that for $\w \in \zarW_N$ and $\tilde{\v} \in \tilde{\zarV}_N$,
$Q_{\w,\tilde{\v}}^{-1}(\bf0)$ is either empty or finite.

Hence, the points of $Q^{-1}_{\v, \w}(\mathbf{0})$ are in bijection
with those in $\L(\tildef, \v, \w)$ forgetting their $0$-coordinates
corresponding to $\Z_j=0$.
We define $\zarV'_N = \tilde{\zarV}_N \times \CC^{p} \subset \CC^{2m-2p}$.
We deduce straightforwardly that for $\v \in \zarV'_N$ and $\w \in \zarW_N$,
the Jacobian matrix of ${\sf wlagrange}(\tildef, \v)$ has
maximal rank at any point of $\L_p(\tildef, \v, \w)$. By the Jacobian
criterion, this also implies that the set $\L_p(\tildef, \v, \w)$ is
finite as requested.

We prove now Assertion (b).

Let $\zarW \subset \CC^n$ and $\zarV' \subset \CC^{2m-p}$ be the non-empty
Zariski open sets defined in the proof of Assertion (a). For $\w \in \zarW$
and $\v \in \zarV'$, the projection of $\L_p(\tildef, \v, \w)$ on the
$(\vecx,\vecy)-$space is finite.
Since $H \in \zarH$, $\incidencereg(\H, \u, p)$ is smooth and
equidimensional.

Since we work on $\incidencereg(\H, \u, p)$, one of the $p \times p$
minors of $\tilde{H}(\vecx)$ is non-zero. Hence, suppose to work in
$O_N \cap \incidencereg(\H, \u, p)$ where $O_N \subset \CC^{n+p+1}$
has been defined in the proof of Assertion (a). Remark that
\[
\crit(\pi_\w, \incidencereg(\H, \u, p)) \, = \, \bigcup_N \, \crit(\pi_\w, O_N \cap \incidencereg(\H, \u, p))
\]
where $N$ runs over the set of $p \times p$ minors of $\tilde{H}(\vecx)$.
We prove below that there exists a non-empty Zariski open set
$\zarV \subset \CC^{2m-p}$ such that if $\v \in \zarV$, for all $N$
and for $\w \in \zarW$, the set $\crit(\pi_\w, O_N \cap \incidencereg(\H, \u, p))$
is finite and contained in the projection of $\L_p(\f, \v, \w)$. This
straightforwardly implies that the same holds for $\crit(\pi_\w, \incidencereg(\H, \u, p))$.

Suppose w.l.o.g. that $N$ is the upper left $p \times p$ minor of $\tilde{H}(\vecx)$.
We use the notation $\tilde{\f}, \tilde{\g}, \tilde{\h}$ as above. Hence,
the set $\crit(\pi_\w, O_N \cap \incidencereg(\H, \u, p))$ is the image by the
projection $\pi_{\vecx,\vecy}$ over the $(\vecx,\vecy)-$space, of the
constructible set defined by $\tildef, \tilde{\g}, \tilde{\h}$ and $\vecz
\neq 0$. We previously proved that, if $\w \in \zarW_N$, $q^{-1}(\bf0)$ is either empty
or equidimensional of dimension $1$. Hence, the constructible set defined by
$\tildef, \tilde{\g}, \tilde{\h}$ and $\vecz \neq 0$, which is isomorphic
to $q^{-1}(\bf0)$, is either empty or equidimensional of dimension $1$.

Moreover, for any $(\vecx,\vecy) \in \crit(\pi_\w, O_N \cap \incidencereg(\H, \u, p))$,
$\pi_{\vecx,\vecy}^{-1}(\vecx,\vecy)$ has dimension 1, by the homogeneity
of polynomials w.r.t. variables $\vecz$. By the Theorem on the Dimension
of Fibers \cite[Sect. 6.3, Theorem 7]{Shafarevich77}, we deduce that
$\crit(\pi_\w, O_N \cap \incidencereg(\H, \u, p))$ is finite.






For $(\vecx,\vecy) \in \crit(\pi_\w, O_N \cap \incidencereg(\H, \u, p))$, let
$\zarV_{(\vecx,\vecy),N} \subset \CC^{2m-p}$ be the non-empty Zariski open
set such that if $\v \in \zarV_{(\vecx,\vecy),N}$ the hyperplane
$\v'\vecz-1=0$ intersects transversely $\pi_{\vecx,\vecy}^{-1}(\vecx,\vecy)$.
Recall that $\zarV'_N \subset \CC^{2m-p}$ has been defined in the proof of
Assertion (a). Define
$$
\zarV_N = \zarV'_N \cap \bigcap_{(\vecx,\vecy)} \zarV_{(\vecx,\vecy),N}
$$
and $\zarV = \bigcap_N \zarV_N$. This concludes the proof, since $\zarV$
is a finite intersection of non-empty Zariski open sets.
\foorp

\subsection{Conclusion} \label{ssec:dimlag:proof}
We denote by $\zarM_1 \subset \GL(n,\CC)$ the set of non-singular matrices
$M$ such that the first row $\w$ of $M^{-1}$ lies in the set $\zarW$
given in Lemma \ref{lemma:intermediate}: this set is non-empty and Zariski
open since the entries of $M^{-1}$ are rational functions of the entries of $M$.
Let $\zarV \subset \CC^{2m-p}$ be the non-empty Zariski open set given by Lemma
\ref{lemma:intermediate} and let $\v \in \zarV$.
Let $\e_1$ be the row vector $(1, 0, \ldots, 0) \in \QQ^n$ and for all $M \in \GL(n,\CC)$, let
\[ 
\tilde{M} =
\left[
\begin{array}{cc}
  M & {0} \\
{0} & \Id_m \\
\end{array}
\right]. 
\]
Remark that for any $M \in \zarM_1$ the following identity holds:
$$
\left[\begin{array}{c}
  \jac \f(\H^M, \u, p) \\
  \e_1 \quad 0\; \cdots \; 0\\  
\end{array}\right] = \left[\begin{array}{cc}
  \jac \f(\H, \u, p) \\
\w \quad 0 \; \cdots \; 0\\
\end{array}\right]\tilde{M}.
$$
We conclude that the set of solutions of the system 
\begin{equation}
  \label{eq:dim:1}
  \left(\f(\H, \u, p), \quad
  \vecz'
\left[\begin{array}{c}
  \jac\f(\H, \u, p) \\
  \w \quad 0\; \cdots \; 0\\  
\end{array}\right],
\quad \v'\vecz-1 \right)
\end{equation}
is the image by the map $(\vecx,\vecy) \mapsto \tilde{M}^{-1}(\vecx,\vecy)$
of the set ${S}$ of solutions of the system
\begin{equation}
  \label{eq:dim:2}
  \left(\f(\H, \u, p), \quad
 \vecz'\left[\begin{array}{c}
      \jac\f(\H, \u, p) \\
      \e_1 \quad 0\; \cdots \; 0\\  
\end{array}\right], \quad \v'\vecz-1 \right).
\end{equation}
Now, let $\varphi$ be the projection that eliminates the last
coordinate $\z_{2m-p+1}$. Remark that $\varphi(S)
= {\sf L}_p(\f^M, \v, \e_1)$
.

Now, applying Lemma \ref{lemma:intermediate} ends the proof. 

\foorp

\section{Proof of Proposition \ref{prop:closedness}}\label{sec:closedness}

The proof of Proposition \ref{prop:closedness} relies on results of
\cite[Section 5]{HNS2014} 
and of \cite{SaSc03}.  We use the
same notation as in \cite[Section 5]{HNS2014}, and we recall them below.
\paragraph*{Notations} For $\setZ \subset \CC^n$ of dimension $d$, we denote by
$\Omega_i(\setZ)$ its $i-$equidimensional component, $i=0, \ldots, d$. We denote by
$\scS(\setZ)$ the union of:
\begin{itemize}
\item $\Omega_0(\setZ) \cup \cdots \cup \Omega_{d-1}(\setZ)$
\item the set $\sing(\Omega_d(\setZ))$ of singular points of $\Omega_d(\setZ)$.
\end{itemize}
Let $\pi_i$ be the map $(\x_1, \ldots, \x_n) \to (\x_1, \ldots, \x_i)$.
We denote by $\scC(\pi_i, \setZ)$ the Zariski closure of the union of the
following sets:
\begin{itemize}
\item $\Omega_0(\setZ) \cup \cdots \cup \Omega_{i-1}(\setZ)$;
\item the union for $r \geq i$ of the sets $\crit(\pi_i, \reg(\Omega_r(\setZ)))$.
\end{itemize}
For $M \in \GL(n,\CC)$ and $\setZ$ as above, we define the collection
of algebraic sets $\{\mathcal{O}_i(\setZ^M)\}_{0 \leq i \leq d}$ as follows:
\begin{itemize}
\item ${\mathcal O}_d(\setZ^M)=\setZ^M$;
\item ${\mathcal O}_i(\setZ^M)=\scS({\mathcal O}_{i+1}(\setZ^M))
\cup \scC(\pi_{i+1},  {\mathcal O}_{i+1}(\setZ^M)) \cup \\
\cup \scC(\pi_{i+1},\setZ^M)$ for $i=0, \ldots, d-1$.
\end{itemize}
We finally recall the two following properties: 

{\it Property $\sfP(\setZ)$.} Let ${\setZ} \subset \CC^n$ be
an algebraic set of dimension $d$. We say that $M \in \GL(n,\CC)$ satisfies
$\sfP(\setZ)$ when for all $i = 0, 1, \ldots, d$:
\begin{enumerate}
\item 
${\mathcal O}_i({\setZ}^M)$ has dimension $\leq i$ and
\item 
${\mathcal O}_i({\setZ}^M)$ is in Noether position with respect to $\x_1, \ldots, \x_i$.
\end{enumerate}

{\it Property ${\sf Q}$.} We say that an algebraic set $\setZ$ of dimension $d$
satisfies ${\sf Q}_i(\setZ)$ (for a given $1 \leq i \leq d$) if for any connected
component $\cc$ of ${\setZ}\cap \RR^n$ the boundary of $\pi_i(\cc)$ is contained
in $\pi_i({\mathcal O}_{i-1}(\setZ) \cap {\cc})$. We say that $\setZ$ satisfies
$\sfQ$ if it satisfies $\sfQ_1, \ldots, \sfQ_d$.

Let $\setZ\subset \CC^n$ be an algebraic set of dimension $d$. By
\cite[Proposition 15]{HNS2014}, there exists a non-empty Zariski open
set $\mathscr{M}\subset \GL(n,\CC)$ such that for $M\in
\mathscr{M}\cap \GL(n,\QQ)$ Property ${\sf P}(\setZ)$ holds. Moreover,
if $M \in \GL(n, \QQ)$ satisfies ${\sf P}(\setZ)$, then ${\sf
  Q}_i(\setZ^M)$ holds for $i=1, \ldots, d$ \cite[Proposition
16]{HNS2014}.
We use these results in the following proof of Proposition \ref{prop:closedness}.

\proof We start with assertion (a).  Let $\zarM_2 \subset
  \GL(n,\CC)$ be the non-em\-pty Zariski open set of \cite[Proposition
  17]{HNS2014} for $\setZ = \setH_p$: for $M \in \zarM_2$, $M$
  satisfies ${\sfP}(\setH_p)$.  Remark that the  connected components of
  $\setH_p \cap \RR^n$ and are in bijection with those of $\setH^M_p \cap \RR^n$ (given by
  $\cc \leftrightarrow \cc^M$). Let $\cc^M \subset \setH^M_p \cap
  \RR^n$ be a connected component of $\setH^M_p \cap \RR^n$. Let
  $\pi_1$ be the projection on the first variable $\pi_1 \colon
  \RR^{n} \to \RR$, and consider its restriction to $ \setH^M_r \cap
  \RR^n$.  Since $M \in \zarM_2$, by \cite[Proposition 16]{HNS2014}
  the boundary of $\pi_1(\cc^M)$ is included in $\pi_1({\mathcal
    O}_0(\setH^M_p) \cap \cc^M)$ and in particular in
  $\pi_1(\cc^M)$. Hence $\pi_1(\cc^M)$ is closed.

We prove now assertion (b). 
Let $M \in \zarM_2$, $\cc$ a connected component of $\setH_p \cap \RR^n$ and
$\fiber \in \RR$ be in the boundary of $\pi_1(\cc^M)$. By \cite[Lemma 19]{HNS2014}
$\pi_1^{-1}(\fiber) \cap \cc^M$ is finite.

We claim that, up to genericity
assumptions on $\u \in \QQ^{p+1}$, for $\vecx \in \pi_1^{-1}(\fiber) \cap \cc^M$,
the linear system $\vecy \mapsto \f(\H^M, \u, p)$ has at least one solution. 
We deduce that
there exists a non-empty Zariski open set $\zarU_{\cc,\vecx} \subset
\CC^{p+1}$ such that if $\u \in \zarU_{\cc,\vecx} \cap \QQ^{p+1}$, there exists $\vecy \in
\QQ^{p+1}$ such that $(\vecx,\vecy) \in \setV(\H^M, \u, p)$. One concludes by taking
\[
\zarU =  \bigcap_{\cc \subset \setH_p \cap \RR^n} \bigcap_{\vecx \in \pi_1^{-1}(\fiber) \cap \cc^M} \zarU_{\cc,\vecx},
\]
which is non-empty and Zariski open since:
\begin{itemize}
\item the collection $\{\cc \subset \setH_p \cap \RR^n \, \text{connected component}\}$
  is finite;
\item the set $\pi_1^{-1}(\fiber) \cap \cc^M$ is finite.
\end{itemize}
It remains to prove the claim we made. For $\vecx \in \pi_1^{-1}(\fiber) \cap \cc^M$, the matrix
$\tilde{\H}(\vecx)$ is rank defective, and let $p' \leq p$ be its rank.
The linear system
\[
\left[ \begin{array}{c} \tilde{\H}(\vecx) \\ \u \end{array} \right] \cdot \vecy =
\left[ \begin{array}{c} {\bf 0} \\ 1 \end{array} \right]
\]
has a solution if and only if
\[
\rank \left[ \begin{array}{c} \tilde{\H}(\vecx) \\ \u \end{array} \right] =
\rank \left[ \begin{array}{c} \tilde{\H}(\vecx) \\ \u \end{array}
\begin{array}{c} {\bf 0} \\ 1 \end{array} \right],
\]
and the rank of the second matrix is $p'+1$. Denoting by $\zarU_{\cc,\vecx}
\subset \CC^{p+1}$ the complement in $\CC^{p+1}$ of the $p'-$dimensional
linear space spanned by the rows of $\tilde{\H}(\vecx)$, proves the claim and
concludes the proof.\foorp

\section{Experiments} \label{sec:exper}

The algorithm {\sf LowRankHankel} has been implemented under
\textsc{Ma\-ple}.  We use the \textsc{FGb} \cite{faugere2010fgb}
library implemented by J.-C.  Faugère for solving solving
zero-dimensional polynomial systems using Gr\"obner bases. In
particular, we used the new implementation of \cite{FM11} for
computing rational parametrizations. Our implementation checks the
genericity assumptions on the input.

We test the algorithm with input $m \times m$ linear Hankel matrices
$\H(\vecx)=\H_0+\X_1\H_1+\ldots+\X_n\H_n$, where the entries of $\H_0,
\ldots, \H_n$ are random rational numbers, and an integer $0 \leq r
\leq m-1$. None of the implementations of Cylindrical Algebraic
Decomposition solved our examples involving more that $3$ variables.
Also, on all our examples, we found that the Lagrange systems define
finite algebraic sets. 

We compare the practical behavior of ${\sf LowRankHankel}$ with the
performance of the library {\sc RAGlib}, implemented by the third
author (see \cite{raglib}). Its function ${\sf PointsPerComponents}$,
with input the list of $(r+1)-$minors of $\H(\vecx)$, returns one
point per connected component of the real counterpart of the algebraic
set $\setH_r$, that is it solves the problem presented in this
paper. It also uses critical point methods. The symbol $\infty$ means
that no result has been obtained after $24$ hours. The symbol matbig
means that the standard limitation in $\textsc{FGb}$ to the size of
matrices for Gr\"obner bases computations has been reached.

We report on timings (given in seconds) of the two implementations in
the next table. The column ${\sf New}$ corresponds to timings of ${\sf
  LowRankHankel}$.  Both computations have been done on an Intel(R)
Xeon(R) CPU $E7540$ $@2.00 {\rm GHz}$ 256 Gb of RAM.
We remark that $\textsc{RAGlib}$ is competetitive for problems of small
size (e.g. $m=3$) but when the size increases ${\sf LowRankHankel}$
performs much better, especially when the determinantal variety has
not co-dimension $1$. It can tackle problems that are out reach of
$\textsc{RAGlib}$. Note that for fixed $r$, the algorithm seems to have a
behaviour that is polynomial in $nm$ (this is particularly visible
when $m$ is fixed, e.g. to $5$).

{\tiny
\begin{table}
  \centering
  \begin{tabular}{|c|c|c||c|c|}
    \hline
    $(m,r,n)$ & {\sf RAGlib} & {\sf New} & {\sf TotalDeg} & {\sf MaxDeg}\\
    \hline
    \hline
    $(3,2,2)$ & 0.3 & 5   & 9  & 6\\
    $(3,2,3)$ & 0.6 & 10  & 21 & 12\\
    $(3,2,4)$ & 2 & 13  & 33 & 12\\
    $(3,2,5)$ & 7 & 20  & 39 & 12\\
    $(3,2,6)$ & 13 & 21  & 39 & 12\\
    $(3,2,7)$ & 20 & 21  & 39 & 12\\
    $(3,2,8)$ & 53 & 21  & 39 & 12\\
\hline
\hline
    $(4,2,3)$ & 2 & 2.5 & 10 & 10\\
    $(4,2,4)$ & 43 & 6.5 & 40 & 30\\
    $(4,2,5)$ & 56575 & 18  & 88 & 48\\
    $(4,2,6)$ & $\infty$ & 35  & 128 & 48\\
    $(4,2,7)$ & $\infty$ & 46  & 143 & 48\\
    $(4,2,8)$ & $\infty$ & 74  & 143 & 48\\
\hline
\hline
    $(4,3,2)$ & 0.3 & 8 & 16 & 12\\
    $(4,3,3)$ & 3 & 11 & 36 & 52\\
    $(4,3,4)$ & 54 & 31  & 120 & 68\\
    $(4,3,5)$ & 341 & 112  & 204 & 84\\
    $(4,3,6)$ & 480 & 215  & 264 & 84\\
    $(4,3,7)$ & 528 & 324  & 264 & 84\\
    $(4,3,8)$ & 2638 & 375  & 264 & 84\\
\hline
\hline
    $(5,2,5)$ & 25 & 4  & 21 & 21 \\
    $(5,2,6)$ & 31176 & 21  & 91 & 70\\
    $(5,2,7)$ & $\infty$ & 135  & 199 & 108\\
    $(5,2,8)$ & $\infty$ & 642  & 283 & 108\\
    $(5,2,9)$ & $\infty$ & 950  & 311 & 108\\
    $(5,2,10)$ & $\infty$ & 1106  & 311 & 108\\
\hline
\hline
    $(5,3,3)$ & 2 & 2 & 20 & 20\\
    $(5,3,4)$ & 202 & 18 & 110 & 90\\
    $(5,3,5)$ & $\infty$ & 583  & 338 &228\\
    $(5,3,6)$ & $\infty$ & 6544  & 698 & 360\\
    $(5,3,7)$ & $\infty$ & 28081  & 1058 & 360\\
    $(5,3,8)$ & $\infty$ & $\infty$  & - & - \\
    \hline
  \end{tabular}
  \label{tab:time}
\caption{Timings and degrees}
\end{table}
\begin{table}
  \centering
  \begin{tabular}{|c|c|c||c|c|}
    \hline
    $(m,r,n)$ & {\sf RAGlib} & {\sf New} & {\sf TotalDeg} & {\sf MaxDeg}\\
    \hline
    \hline
\hline
\hline
    $(5,4,2)$ & 1 & 5 & 25 & 20\\
    $(5,4,3)$ & 48 & 30 & 105 & 80\\
    $(5,4,4)$ & 8713 & 885 & 325 & 220\\
    $(5,4,5)$ & $\infty$ & 15537 & 755 & 430\\
    $(5,4,6)$ & $\infty$ & 77962 & 1335 & 580\\
\hline
\hline
    $(6,2,7)$ & $\infty$ & 6  & 36 & 36 \\
    $(6,2,8)$ & $\infty$ & matbig  & - & - \\

\hline
\hline
    $(6,3,5)$ & $\infty$ & 10  & 56 & 56 \\
    $(6,3,6)$ & $\infty$ & 809  & 336 & 280 \\
    $(6,3,7)$ & $\infty$ & 49684  & 1032 & 696 \\
    $(6,3,8)$ & $\infty$ & matbig  & - & - \\

\hline
\hline
    $(6,4,3)$ & 3 & 5  & 35 & 35 \\
    $(6,4,4)$ & $\infty$ & 269  & 245 & 210 \\
    $(6,4,5)$ & $\infty$ & 30660  & 973 & 728 \\
    $(6,4,6)$ & $\infty$ & $\infty$  & - & - \\

\hline
\hline
    $(6,5,2)$ & 1 & 9  &36  & 30 \\
    $(6,5,3)$ & 915 & 356  & 186 & 150 \\
    $(6,5,4)$ & $\infty$ & 20310  & 726 & 540 \\
    $(6,5,5)$ & $\infty$ & $\infty$ & - & - \\
    \hline
  \end{tabular}
  \label{tab:time2}
\caption{Timings and degrees (continued)}
\end{table}
}


Finally, we report in column ${\sf TotalDeg}$ the degree of the
rational parametrization obtained as output of the algorithm, that is
the number of its complex solutions. We observe that this value is
definitely constant when $m,r$ are fixed and $n$ grows, as for the
maximum degree (column ${\sf MaxDeg}$) appearing during the recursive
calls.

The same holds for the multilinear bound given in Section
\ref{ssec:algo:complexity} for the total number of complex solutions.

\newpage

\end{document}